\documentclass[runningheads]{llncs}
\usepackage{algorithm}
\usepackage{algorithmic}
\usepackage[T1]{fontenc}

\def\orcidID#1{\smash{\href{http://orcid.org/#1}{\protect\raisebox{-1.25pt}{\protect\includegraphics{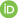}}}}}

\usepackage{pifont}
\usepackage{tikz}
\usetikzlibrary{calc,fit,positioning}
\usetikzlibrary{
    positioning,
    fit,
    arrows.meta,
    backgrounds
}
\usepackage{graphicx}

\usepackage{multicol}
\usepackage{multirow,changepage} 
\usepackage{hhline}
\usepackage{longtable} 
\usepackage{booktabs}
\usepackage{xcolor}

\usepackage{xspace}
\usepackage{todonotes}

\usepackage{amsmath, amssymb}
\usepackage{cite} 
\usepackage{hyperref}

\newcommand{\ZZ}{\mathbb{Z}}

\newcommand{\BB}{\mathbb{B}}
\newcommand{\QQ}{\mathbb{Q}}

\usepackage{siunitx}

\newcommand{\abc}{\textsc{ABC}\xspace}

\newcommand{\amulettwo}{\textsc{AMulet2}\xspace}
\newcommand{\teluma}{\textsc{TeluMA}\xspace}

\newcommand{\dynphaseorderopt}{\textsc{DynPhaseOrderOpt}\xspace}

\newcommand{\talisman}{\textsc{TalisMan1}\xspace}
\newcommand{\talismantwo}{\textsc{TalisMan2.0}\xspace}
\newcommand{\multiling}{\textsc{MultiLinG}\xspace}
\newcommand{\pblib}{\textsc{pblib}\xspace}

\newcommand{\cadical}{\textsc{CaDiCaL}\xspace}
\newcommand{\cmsgen}{\textsc{CMSGen}\xspace}

\newcommand{\flint}{\textsc{FLINT}\xspace}

\def\<#1>{\langle#1\rangle}

\def\specfontfamily#1{\mathcal{#1}}
\def\spec{{\specfontfamily{S}}}

\begin{document}
\title{Avoiding Big Integers: Parallel Multimodular Algebraic Verification of Arithmetic Circuits} %

\titlerunning{Multimodular Circuit Verification}
%
\author{Clemens Hofstadler\inst{1}\orcidID{0000-0002-3025-0604} \and
  Daniela Kaufmann\inst{2}\orcidID{0000-0002-5645-0292} \and
  Chen Chen\inst{3}\orcidID{0009-0007-8825-6260}}
\authorrunning{C. Hofstadler et al.}
%
\institute{Johannes Kepler University, Linz, Austria
  \email{clemens.hofstadler@jku.at}
  \and
  TU Wien, Austria
  \email{daniela.kaufmann@tuwien.ac.at}\and
  Hong Kong University of Science and Technology (Guangzhou), China\\
  \email{cchen099@connect.hkust-gz.edu.cn}}
\maketitle              
\begin{abstract}

Word-level verification of arithmetic circuits with large oper\-ands typically
relies on arbitrary-precision arithmetic, which can lead to significant computational
overhead as word sizes grow. In this paper, we present a hybrid algebraic
verification technique based on polynomial reasoning that combines linear and
nonlinear rewriting. Our approach relies on  multimodular reasoning using homomorphic images, where
computations are performed in parallel modulo different primes, 
thereby avoiding any large-integer arithmetic. We implement the proposed method in the
verification tool \talismantwo and evaluate it on a suite of multiplier
benchmarks. Our results show that hybrid multimodular reasoning significantly
improves upon existing approaches.

  \keywords{Computer Algebra \and Hardware Verification \and Homomorphic Images \and Linear Extractions 
  \and Guess and Prove}
\end{abstract}
\section{Introduction}

Word-level reasoning using computer algebra has proven to be an effective
approach for the formal verification of arithmetic circuits, especially for
circuits encoding nonlinear operations, such as multiplier and divider 
circuits~\cite{KonradScholl-FMCAD24,
KaufmannBiereKauers-FMCAD19,KonradSchollMahzoonGrosseDrechsler-FMCAD22,Kaufmann-TACAS25,hofstadler_et_al:LIPIcs.CP.2025.14},
as it avoids the combinatorial blow-up caused by bit-blasting. In this line of
work, circuits are translated into systems of polynomial equations whose
solutions characterize all valid circuit behaviors. Verification then reduces to
proving that a given specification polynomial is implied by the circuit
encoding.

A major limitation of existing word-level algebraic reasoning
approaches is their reliance on large-integer arithmetic. For arithmetic
circuits with word sizes of 64 bits and beyond, the resulting polynomial systems
contain coefficients whose bit-widths exceed native machine word sizes by orders
of magnitude. For instance, the specification of 64-bit multipliers contains
coefficients up to $2^{127}$. As a result, state-of-the-art reasoning tools
typically depend on arbitrary-precision arithmetic, e.g., the implementations
of~\cite{fastpoly,KonradScholl-FMCAD24,KaufmannBiere-TACAS21,talisman-artifact} 
all rely on the GMP library~\cite{gmp}, which incurs computational
overhead and limits scalability in practice.

In this paper, we address this limitation by introducing a verification framework 
based on \emph{parallel multimodular reasoning}. Our approach avoids large-integer
arithmetic entirely by computing \emph{homomorphic images} of the polynomial
system modulo multiple machine-word-sized prime numbers. Rather than reasoning 
over $\mathbb{Z}$ or $\mathbb{Q}$, all computations are carried out
modulo those primes in parallel, ensuring that coefficients remain
within machine word sizes. 
The Chinese remainder theorem (Theorem~\ref{thm:crt}) ensures the correctness,
provided that sufficiently many primes are used (Theorem~\ref{thm:main}).
While homomorphic images are a classical
tool in computer algebra, see e.g.,~\cite[Ch.~5]{gerhard2013modern},
to the best of our knowledge this is the first work to employ them for word-level circuit verification.

To exploit multimodular reasoning effectively, we combine it with a \emph{hybrid
rewriting framework} that integrates both linear and nonlinear algebraic
reasoning. 
Linearization-based techniques~\cite{Kaufmann-TACAS25,
hofstadler_et_al:LIPIcs.CP.2025.14}  extract linear relations from the algebraic encoding of the circuit 
to enable a fast linear rewriting of the previously linearized circuit
specification. 
We build upon the 
guess-and-prove strategy presented in~\cite{hofstadler_et_al:LIPIcs.CP.2025.14}.
The core idea is to sample valid assignments for extracted subcircuits,
from which linear relations are guessed using linear algebra. The
correctness of these guesses is proved using satisfiability
(SAT) solving. In this paper, we significantly improve, parallelize, and adapt this approach to
our multimodular setting. 
In particular, we apply a structure-aware subcircuit selection, 
weighted sampling for more targeted assignments, and dedicated linear algebra routines to better handle a large number of samples. 
All of these optimizations significantly improve the guess-and-prove strategy.

When linear reasoning becomes computationally infeasible due to the
size of the selected subcircuit, our framework switches to nonlinear rewriting, 
using the nonlinear circuit encoding to continue rewriting the specification until completion.  
This combination yields the efficiency of linearization together with the robustness of nonlinear 
reasoning. Additionally, this procedure allows us to always produce counterexamples in case of faulty
circuits, which is not always possible in purely linear approaches.

We implement the proposed approach in our novel verification tool 
\talismantwo~\cite{talismantwo}, 
and evaluate it on a suite of multiplier benchmarks. 
The results demonstrate that parallel multimodular reasoning combining linear and nonlinear rewriting 
is able to efficiently solve a large set of benchmarks while remaining robust.


\section{Preliminaries}

In Section~\ref{sec:prelim-gb}, we review fundamental concepts from polynomial
algebra that are required throughout this work. For a more
comprehensive introduction, we refer to the standard textbooks~\cite{CLO05,CLO15}. In
Section~\ref{ssec:prelim-aig}, we recall And-Inverter Graphs (AIGs) and describe how
they can be represented by polynomial encodings.

\subsection{Algebra}
\label{sec:prelim-gb}

Let $X = \{x_{1},\dots,x_{n}\}$ be a finite set of indeterminates and let
$R$ be a commutative ring with unity.
  A \emph{monomial} (in $X$) is a product of the form
  $x_1^{\alpha_1}\dots x_n^{\alpha_n}$,
  where $\alpha_1,\dots,\alpha_n \in \mathbb N$.
  The set of all monomials in $X$ is denoted by $[X]$.

  A \emph{polynomial} $f$ (over $R$) is a finite
  $R$-linear combination of monomials, i.e.,
  $
    f = c_1 m_{1} + \dots + c_s m_{s},
  $
  with \emph{coefficients} $c_1,\dots,c_s \in R$
  and monomials $m_{1},\dots,m_{s} \in [X]$.
  The product $c_i m_{i}$ is called a \emph{term}.
  The set of all polynomials is denoted by
  $R[X] = R[x_{1},\dots,x_{n}]$.
  
    The \emph{degree} $\deg(m)$ of a monomial
  $m = x_1^{\alpha_1} \dots x_n^{\alpha_n} \in [X]$
  is defined as the sum of its exponents, that is,
  $\deg(m) = \alpha_{1} + \dots + \alpha_{n}$.
  A polynomial $f$ is called \emph{linear} if every monomial $m$ in $f$ has $\deg(m) \leq 1$, otherwise $f$ is called \emph{nonlinear}.

A \emph{monomial order} is a total order $\prec$ on $[X]$ such that $1 \prec m$ for all $m \neq 1$, 
and $m_1 \prec m_2$ implies $m m_1 \prec m m_2$ for all $m,m_1,m_2\in[X]$.  
Typical examples are the \emph{lexicographic order}, which compares exponent vectors componentwise, 
and the \emph{degree lexicographic order}, which first compares degrees and breaks ties lexicographically.  
An order is \emph{degree-compatible} if smaller degree implies smaller order.
With respect to a fixed monomial order, every nonzero
polynomial $f \in R[X]$ has a unique maximal monomial,
called its \emph{leading monomial}. 

A nonempty subset $I \subseteq R[X]$ is an
\emph{ideal} if 
$f + g \in I$ 
and
$hf \in I$ for all $f, g \in I$ and $h \in R[X]$.
A finite set of polynomials
$F = \{f_{1},\dots,f_{r}\} \subseteq R[X]$
can be interpreted as a system of equations
$f_{1} = \dots = f_{r} = 0$.
The set of all polynomials~$h$ whose equation $h = 0$ can be
derived algebraically from this system forms an ideal, denoted
by $\langle F\rangle$. It is given explicitly by
$$
  \langle F\rangle =
  \{ h_{1} f_{1} + \dots + h_{r} f_{r} \mid
  h_{1},\dots,h_{r} \in R[X] \}.
$$

\subsection{And-Inverter Graphs and their Polynomial Encodings}
\label{ssec:prelim-aig}

An \emph{And-Inverter Graph} (AIG)~\cite{Kuehlmann-TCAD2002} is a directed acyclic graph
representing a Boolean circuit $C$. Internal nodes correspond to logical
conjunctions, while edge annotations encode negation. Primary inputs correspond
to Boolean variables.

A \emph{specification} of an AIG is a polynomial $\spec \in R[X]$, where $R$
is a chosen coefficient domain (typically $\mathbb{Z}$) and $X$ contains all input, gate, and output
variables of the circuit. The specification relates the outputs of the circuit
to its primary inputs. Although AIGs operate on Boolean signals, the
specification is not restricted to the Boolean ring $\BB[X]$. 
For instance, the specification of multipliers with inputs
 $a_{n-1}, \ldots, a_0, b_{n-1}, \ldots, b_0$ and outputs $s_{2n-1}, \ldots,
s_0$ can be represented as $\sum_{i=0}^{2n-1}2^{i}s_{i} - \big(
\sum_{i=0}^{n-1}2^{i}a_{i}\big) \cdot \big(\sum_{i=0}^{n-1}2^{i}b_{i} \big)$ over $\ZZ[X]$ or $\QQ[X]$.

To relate the behavior of gates in the circuit to polynomial equations, 
we map $\top \mapsto 1$ and $\bot \mapsto 0$, yielding the
correspondences $x \land y \simeq xy$ and $\neg x \simeq 1-x$ between logic and algebra.
Based on these relations, we define the following \emph{gate polynomials} for gates in a circuit.

\begin{definition}\label{def:gate}
The \emph{gate polynomial} of an AIG node $g$ with inputs $x, y$ is:
  \[
    \arraycolsep=18pt
    \begin{array}{lcr}
      \mbox{Logical constraint} &             & \mbox{Gate polynomial}                \\
      g = x \land y           & \Rightarrow & g - xy                           \\
      g = \neg x \land y        & \Rightarrow & g-(1-x)y \\
      g = x \land \neg y        & \Rightarrow & g-x(1-y) \\
      g = \neg x \land \neg y   & \Rightarrow & g -(1-x)(1-y)\\
      \end{array}
      \]
We denote by $G(C) \subseteq R[X]$ the set of all gate polynomials of a circuit $C$.
\end{definition}

To enforce Boolean semantics when working over coefficient domains $R \neq \mathbb{B}$,
we introduce \emph{Boolean value polynomials}. For each primary
input $x_i$ of the AIG, we define the constraint $x_i^2 - x_i = 0$. We denote the set of
all such polynomials by $B(C)$.
Boolean constraints propagate through the circuit, and it therefore suffices to
introduce Boolean value polynomials only for the primary inputs~\cite{KaufmannBiereKauers-FMSD19}.

\begin{figure*}[tb]
  \centering
  \begin{minipage}{0.59\textwidth}
    
      \scriptsize $
        \begin{array}{l@{\qquad\qquad}l}
          \mbox{Gate Polynomials $G(C)$}                                     & \mbox{Logical constraints}                       \\
           s_3 -                             g_{24}                & s_3  = g_{24}                                \\
           s_2 -                             g_{28}                & s_2  = g_{28}                                \\
           s_1 -                             g_{20}                & s_1  = g_{20}                                \\
           s_0 -                             g_{10}                & s_0  = g_{10}                                \\
          g_{28} - (1-g_{26})(1-g_{24})			 & g_{28} = \neg g_{26} \land \neg g_{24} \\
          g_{26} - (1-g_{22})(1-g_{16})			& g_{26} = \neg g_{22} \land \neg g_{16} \\
           g_{24} - g_{22}g_{16}                             & g_{24} = g_{22} \land g_{16}           \\
           g_{22} - b_1a_1                                         & g_{22} = b_1 \land a_1                       \\
           g_{20} - (1-g_{18})(1-g_{16})			& g_{20} = \neg g_{18} \land \neg g_{16} \\
           g_{18} - (1-g_{14})(1-g_{12})			& g_{18} = \neg g_{14} \land \neg g_{12} \\
           g_{16} - g_{14}g_{12}                             & g_{16} = g_{14} \land g_{12}           \\
           g_{14} - b_1a_0                                         & g_{14} = b_1 \land a_0                       \\
           g_{12} - b_0a_1                                         & g_{12} = b_0 \land a_1                       \\
           g_{10} - b_0a_0                                         & g_{10} = b_0 \land a_0                       \\
        \end{array}
      $ \vspace{2ex} 

  $B(C)$: $a_1^2-a_1, a_0^2-a_0, b_1^2-b_1, b_0^2-b_0$ \vspace{1ex}

  Spec $\spec: 8s_3+4s_2+2s_1+s_0 - 4a_1b_1 - 2a_1b_0-2a_0b_1-a_0b_0$
  \end{minipage}
  \begin{minipage}{0.35\textwidth}
    \includegraphics[width=.99\textwidth]{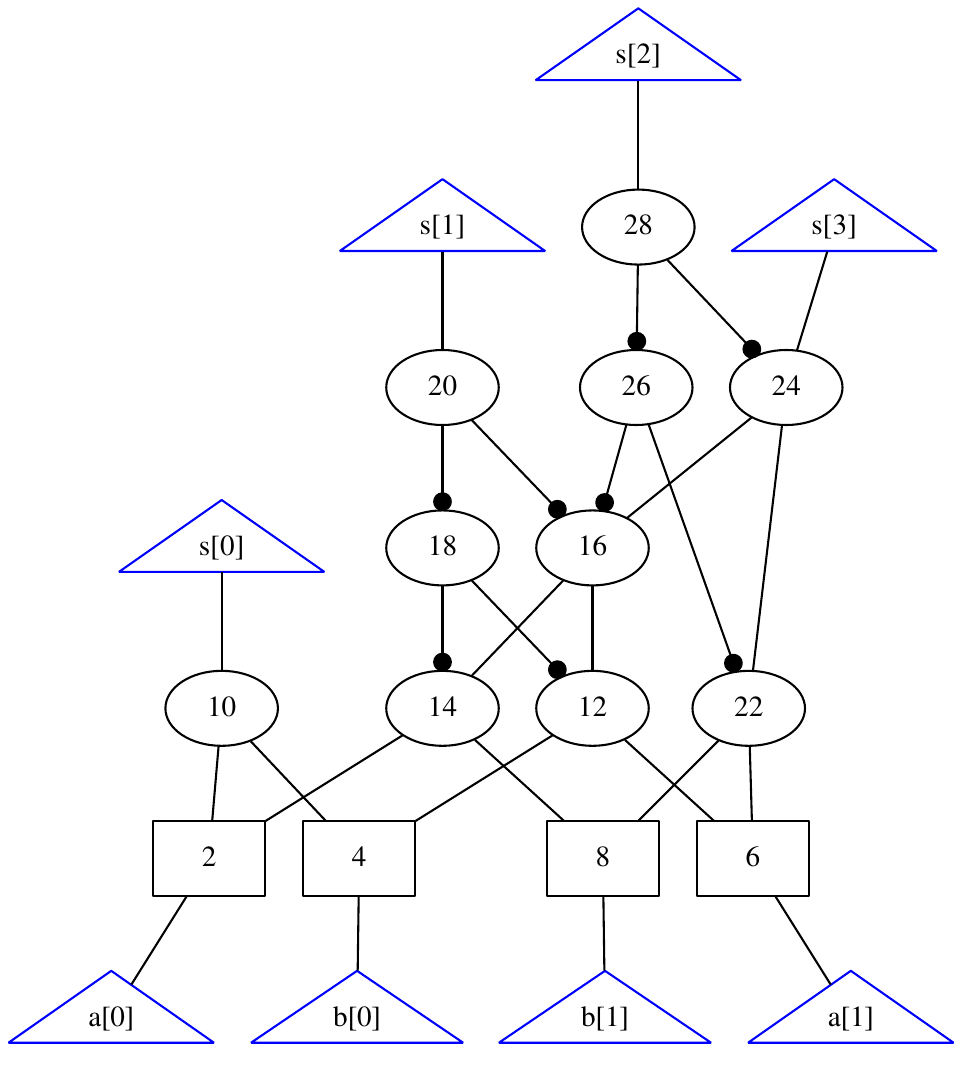}
  \end{minipage}\hfill

  \vspace{2ex}
  
  \caption{AIG and polynomial encoding of a 2-bit multiplier.}
  \label{fig:btor2} 
\end{figure*}

\begin{example}\label{example:gate}
  Figure~\ref{fig:btor2}
   shows an AIG representing a 2-bit
  multiplier and its corresponding polynomial encoding.
  We denote the primary
  inputs by $a_0, a_1, b_0, b_1$ and outputs by
  $s_0, s_1, s_2, s_3$. The internal nodes are denoted by
  $g_i$, where $i$ corresponds to the respective AIG node.
  The specification $\spec$ expresses that the weighted sum of output bits (= output bit-vector) is equal to
  the product of the weighted sum of input bits (= input bit-vectors).
\end{example}

To relate a circuit to its specification, we recall the notion of models from~\cite{KaufmannBiereKauers-FMCAD19}.
First, recall that a \emph{(Boolean) assignment} is a function $\varphi\colon X \to \{0,1\} \subseteq R$ that maps the variables in $X$ to the values 0 and 1.
Any assignment $\varphi$ can be extended to an evaluation of polynomials by substituting $\varphi(x)$ for each variable $x \in X$.
For a polynomial $p \in R[X]$, we denote the evaluation under $\varphi$ by $\varphi(p) \in R$.

\begin{definition}
Let $P \subseteq R[X]$ be a set of polynomials.
A (Boolean) assignment~$\varphi$ is a \emph{model} of $P$ (over $R$) if $\varphi(p) = 0$ for all $p \in P$. 
Moreover, for $q \in R[X]$, we write $P \models_{R} q$ if every model for $P$ is also a model for $\{q\}$,
i.e., whenever $\varphi(p) = 0$ for all $p \in P$, then also $\varphi(q) = 0$.
\end{definition}

We say that a circuit $C$ \emph{fulfills} a specification $\spec$ (over $R$) if $G(C) \models_{R} \spec$.
Note that, since we consider only Boolean assignments, this is equivalent to $G(C) \cup B(C) \models_{R} \spec$.
By~\cite[Thm.~2,3]{KFBK24}, this holds if and only if the specification polynomial $\spec$ lies in the ideal $\langle G(C) \cup B(C) \rangle$.
Hence, correctness of $C$ can be verified algebraically by checking whether $\spec \in \langle G(C) \cup B(C) \rangle$.
Classically, this computation is performed over $R = \ZZ$ (or $R = \QQ$ when a field is required).
In Section~\ref{sec:multimod}, we present a multimodular approach that reduces verification over $\ZZ$ to verification over sufficiently many finite rings $\ZZ_{m_i}$.

\section{Related Work}

The ideal membership 
$
\spec \overset{?}{\in} \langle G(C) \cup B(C) \rangle
$
is decidable using Gröbner basis theory~\cite{Buchberger65}. The set
$G(C) \cup B(C)$ needs to be transformed into a Gröbner basis, which can be of double 
exponential complexity in the number of variables~\cite{mayr1982complexity}. 
However, if a lexicographic monomial order is chosen such that variables follow 
a reverse topological order of
the circuit, the set~$G(C) \cup B(C)$ already forms a Gröbner
basis~\cite{KFBK24}. Thus, correctness of the
circuit can be decided by computing the remainder of~$\spec$ under multivariate
polynomial division with respect to $G(C) \cup B(C)$, see~\cite[Ch.~2.3]{CLO15}. We refer to this approach
as \emph{nonlinear rewriting}.

For multiplier verification, nonlinear rewriting with respect to a
lexicographic order is known to suffer from severe \emph{monomial
blow-up}. Since the leading monomial of gate polynomials typically has lower degree than
the remaining terms (cf.\ Definition~\ref{def:gate}), the degree of intermediate results may
increase substantially. 
This frequently produces intermediate results with millions of monomials~\cite{MahzoonGrosseDrechsler-ICCAD18} and drastically slows down computations.
To mitigate this effect, several advanced rewriting engines have been proposed
in recent years. Early work employed dedicated preprocessing and incremental
reasoning~\cite{KaufmannBiereKauers-FMCAD19}, where syntactically identified parts
of the multiplier are rewritten using SAT solving prior to applying a
column-wise verification algorithm~\cite{KaufmannBiere-TACAS21}. Subsequent
refinements eliminated the reliance on an external SAT solver by introducing a
syntactic algebraic encoding that incorporates literal
polarities~\cite{KaufmannBeameBiereNordstrom-DATE22}. However, the dependence on
syntactic heuristics remains a major limitation of these approaches, as they are often not robust for unseen circuit architectures, such as
those generated by logic synthesis.

Orthogonal to this line of work, dynamic rewriting determines the
rewrite order on the fly and backtracks if intermediate results exceed a
predefined threshold~\cite{MahzoonGrosseSchollDrechsler-DATE20}. This idea was
further extended by incorporating literal polarities into the encoding~\cite{KonradScholl-FMCAD24}. While this approach is more robust across
different circuit architectures, it remains infeasible for certain multiplier
designs.

Recent approaches depart from the lexicographic ordering and
instead employ \emph{degree-compatible} orderings to prevent degree growth
during rewriting~\cite{Kaufmann-TACAS25, hofstadler_et_al:LIPIcs.CP.2025.14}. By
linearizing the specification, rewriting is restricted to linear
polynomials, which avoids the monomial blow-up. 
We refer to this approach as \emph{linear rewriting}.

Linear rewriting comes at the cost of having to extract suitable linear relations from the circuit.
To address this, on-the-fly techniques have been proposed that identify
subcircuits and compute Gröbner bases locally to derive the required linear
relations~\cite{Kaufmann-TACAS25}, as well as approaches based on algebraic
transformations and guess-and-prove techniques~\cite{hofstadler_et_al:LIPIcs.CP.2025.14}.
Among these, the guess-and-prove approach has shown particular promise, as it
can handle comparatively large subcircuits ($>1000$ nodes). In this technique, a subcircuit is
selected and valid assignments for that subcircuit are generated. Linear algebra is then used to
derive candidate linear relations (``guesses''), which subsequently need to be proven, for
instance, using SAT solving. If a candidate relation is invalid, a counterexample
returned by the SAT solver can be used to refine the guess. If no suitable
relation is found, the subcircuit is expanded and the process is repeated, 
see~\cite[Sec.~3.2, 4]{hofstadler_et_al:LIPIcs.CP.2025.14} for a more in-depth discussion.

Despite its effectiveness, the guess-and-prove approach has notable
limitations. First, performance degrades significantly once the extracted
subcircuit exceeds a certain size ($>5000$ nodes). Second, due to large coefficients
in guessed polynomials, the SAT-based proofs can become infeasible in practice, as
the coefficients of the derived polynomials must be encoded at the bit level.

Our work addresses these shortcomings by applying linear rewriting via the guess-and-prove approach using
\emph{multimodular} computations, which impose explicit bounds on coefficients. 
Moreover, whenever linear extraction becomes too costly we propose to switch to nonlinear rewriting, 
thereby combining the strengths of both paradigms in a hybrid approach.


\section{Multimodular Verification via Homomorphic Images}
\label{sec:multimod}

Algebraic word-level verification of arithmetic circuits with real-world input sizes of 64 bits and beyond faces a fundamental scalability bottleneck: 
intermediate coefficients arising during symbolic reasoning may become extremely large. 
As a consequence, verification procedures must rely on arbitrary-precision arithmetic, 
which is significantly slower than native machine-word operations that can exploit fast hardware arithmetic.

Such \emph{expression swell} is a well-known challenge in computer algebra. 
Classic examples include linear system solving and Gröbner basis computations over the integers or rationals.
To address this issue, computer algebra systems often employ \emph{homomorphic images}~\cite{brown1971euclid,arnold2003modular,stein2007modular,hofstadler2025modular}.
The central idea of homomorphic images is to map computations from an unbounded domain, such as $\ZZ$ or $\QQ$,
into finite algebraic structures, typically into $\ZZ_{m}$, the integers modulo $m$, 
where all elements have a fixed-size representation and no expression swell can occur. 
Typically, $m$ is chosen to be a prime number, such that $\ZZ_{m}$ is a finite field, 
ensuring desirable algebraic properties for reasoning and computation.

Concretely, we translate the idea into the following multimodular verification strategy:
Instead of verifying that a circuit $C$ fulfills a specification $\spec$ over $\ZZ$,
we perform the verification modulo several integers $m_{1},\dots,m_{k}$ (typically primes).
Working in $\ZZ_{m_{i}}$ ensures that coefficients remain bounded, 
and when the moduli $m_{i}$ are chosen to fit within machine words, we can rely on fast native arithmetic.
The Chinese remainder theorem ensures the correctness of the circuit over the integers,
provided that sufficiently many moduli are used.
We recall the Chinese remainder theorem below and refer to, e.g.,~\cite[Thm.~4.12]{crt} for a proof.

\begin{theorem}[Chinese Remainder Theorem]\label{thm:crt}
Let $m_1,\dots,m_k \in \mathbb{N}$ be pairwise coprime positive integers.
For any $a_{1},\dots,a_{k} \in \ZZ$, the system of linear congruences
$x \equiv a_{i} \bmod m_{i}$ for $i = 1,\dots,k$
has a unique solution modulo $\prod_{i=1}^{k} m_{i}$.
\end{theorem}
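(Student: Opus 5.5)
We argue existence and uniqueness separately, writing $M = \prod_{i=1}^{k} m_i$ and $M_i = M/m_i$ for each $i$.

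\emph{Existence.} Since the $m_j$ are pairwise coprime, each $m_i$ is coprime to $M_i$. To see this, suppose a prime $p$ divides both $m_i$ and $M_i$. Because $p$ divides the product $M_i = \prod_{j\neq i} m_j$, it divides some $m_j$ with $j \neq i$. That contradicts $\gcd(m_i,m_j)=1$.

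By B\'ezout's identity, which follows from the extended Euclidean algorithm, there are integers $u_i, v_i$ with $u_i m_i + v_i M_i = 1$. Set $e_i = v_i M_i$. Then $e_i \equiv 1 \bmod m_i$, and $e_i \equiv 0 \bmod m_j$ for all $j \neq i$, because $m_j$ divides $M_i$.

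Now define $x = \sum_{i=1}^{k} a_i e_i$. Reducing modulo $m_j$, every summand with $i \neq j$ vanishes. The remaining summand gives $x \equiv a_j \bmod m_j$, so $x$ is a solution. Clearly $x + tM$ is then also a solution for every $t \in \ZZ$, so the solution set is a union of residue classes modulo $M$.

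\emph{Uniqueness modulo $M$.} Let $x$ and $x'$ both be solutions. Then $m_i \mid x - x'$ for every $i$. We claim pairwise coprime divisors of an integer have their product as a divisor, and prove it by induction on $k$.

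The case $k=1$ is trivial. For the inductive step, suppose $M' = \prod_{i<k} m_i$ divides $d = x-x'$ and $m_k \mid d$. The argument above shows $\gcd(M', m_k) = 1$, so there are integers $s, t$ with $sM' + t m_k = 1$. Hence
$$d = d s M' + d t m_k.$$
Both terms on the right are divisible by $M' m_k$: the first because $m_k \mid d$, the second because $M' \mid d$. So $M \mid x - x'$, and the solution is unique modulo $M$.

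The only step needing any care is the coprimality of $m_i$ with the product $M_i$, which is used in both parts. We handle it with the prime-divisor argument above, which relies on Euclid's lemma; equivalently, one can multiply the B\'ezout identities for the pairs $(m_i, m_j)$. Everything else is routine.
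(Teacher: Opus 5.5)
Your proof is correct and complete: the Bézout idempotents $e_i = v_i M_i$ give existence, and the induction showing that pairwise coprime divisors of $x - x'$ have their product as a divisor gives uniqueness modulo $M$. The paper does not prove this theorem itself but cites a standard textbook, and your argument is the classical proof such references give.
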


To formally justify our approach, we first observe that the set of models of a circuit does not depend on whether computations are carried 
out over the integers or modulo an integer $m \geq 2$ before proving our main result in Theorem~\ref{thm:main}.

\begin{lemma}
\label{lem:models}
Let $C$ be an AIG and $m \geq 2$.
A Boolean assignment $\varphi$ is a model of $G(C)$ over $\mathbb{Z}$ 
if and only if it is a model of $G(C)$ over $\mathbb{Z}_m$.
\end{lemma}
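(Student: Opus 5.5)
The argument is a direct check, one gate polynomial at a time. The key point is that, under a Boolean assignment, every gate polynomial evaluates to an integer in the small set $\{-1,0,1\}$. An integer of absolute value at most $1$ is divisible by $m \geq 2$ only if it is $0$.

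First fix a Boolean assignment $\varphi\colon X \to \{0,1\}$ and a gate polynomial $p \in G(C)$. By Definition~\ref{def:gate}, $p$ has the form $g - \ell_1\ell_2$, where each $\ell_j$ is either a variable or $1$ minus a variable. Evaluating over $\ZZ$ gives $\varphi(\ell_1),\varphi(\ell_2) \in \{0,1\}$, so $\varphi(\ell_1\ell_2) \in \{0,1\}$. Also $\varphi(g) \in \{0,1\}$. Hence $\varphi(p) \in \{-1,0,1\} \subseteq \ZZ$.

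Second, relate the two evaluations. Let $\pi\colon \ZZ \to \ZZ_m$ be the canonical ring homomorphism. Evaluation of polynomials commutes with ring homomorphisms, and the integer coefficients of $p$ map to their residues. Reading the assignment in $\ZZ_m$ (where $0,1$ are the images of $0,1$), the value of $p$ over $\ZZ_m$ is therefore $\pi(\varphi(p))$.

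Both directions now follow.
\begin{itemize}
\item If $\varphi(p)=0$ over $\ZZ$, then $\pi(0)=0$, so $p$ also vanishes over $\ZZ_m$.
\item Conversely, if $\pi(\varphi(p))=0$, then $m \mid \varphi(p)$. Since $|\varphi(p)|\le 1 < m$, this forces $\varphi(p)=0$.
\end{itemize}
Applying this to all $p \in G(C)$ shows that the models of $G(C)$ over $\ZZ$ and over $\ZZ_m$ coincide.

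The only step needing care is the bound $|\varphi(p)| \le 1$. It relies on the gate polynomials having exactly the shape given in Definition~\ref{def:gate}, so each one is a difference of two $\{0,1\}$-valued quantities. This bound is what makes the argument work for every $m \ge 2$, not just for primes or for large $m$.
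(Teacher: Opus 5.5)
Your proposal is correct and follows essentially the same route as the paper: under a Boolean assignment every gate polynomial evaluates to an integer in $\{-1,0,1\}$, and since $\pm 1$ stay nonzero modulo any $m \geq 2$, vanishing over $\ZZ$ and over $\ZZ_m$ coincide. You spell out two points the paper leaves implicit, namely why the value is bounded and that evaluation commutes with reduction modulo $m$, and you argue the converse directly where the paper uses the contrapositive; the argument is otherwise the same.
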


\begin{proof}
If $\varphi(p) = 0$ over $\mathbb{Z}$, then clearly $\varphi(p) \equiv 0 \bmod m$.
Thus, any model over $\mathbb{Z}$ is also a model over $\mathbb{Z}_m$.
Conversely, assume that $\varphi$ is not a model of $G(C)$ over $\mathbb{Z}$, i.e., there exists $p \in G(C)$ such that $\varphi(p) \neq 0$.
Since $p$ is a gate polynomial, $\varphi(p) \in \{-1,1\}$.
For $m \geq 2$, both values remain nonzero modulo $m$, implying that $\varphi(p) \not\equiv 0 \bmod m$.
So, $\varphi$ is also not a model of $G(C)$ over $\mathbb{Z}_m$.
\end{proof}


\begin{theorem}
\label{thm:main}
Let $C$ be an AIG and let $\spec \in \ZZ[X]$ be a specification of $C$.
Further, let $m_1,\dots,m_k \in \mathbb{N} \setminus \{0,1\}$ be pairwise coprime positive integers such that $\lvert \varphi(\spec)\rvert < \prod_{i=1}^k m_i$ for all Boolean assignments $\varphi$.
Then
\begin{align*}
    G(C) \models_\mathbb{Z} \spec \qquad \text{iff} \qquad
    \forall i \in \{1,\dots,k\} : G(C) \models_{\mathbb{Z}_{m_i}} \spec.
\end{align*}
\end{theorem}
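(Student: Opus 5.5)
The plan is to prove the two directions separately, using Lemma~\ref{lem:models} to identify the models of $G(C)$ over $\ZZ$ with those over each $\ZZ_{m_i}$. The only remaining question is then whether the single integer $\varphi(\spec)$ vanishes, or merely vanishes modulo every $m_i$.

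For the forward direction, assume $G(C) \models_\ZZ \spec$ and fix $i$. Let $\varphi$ be a model of $G(C)$ over $\ZZ_{m_i}$. By Lemma~\ref{lem:models}, $\varphi$ is also a model of $G(C)$ over $\ZZ$, so $\varphi(\spec) = 0$ in $\ZZ$. Since $\spec$ has integer coefficients and $\varphi$ takes values in $\{0,1\}$, evaluating $\spec$ in $\ZZ_{m_i}$ gives the reduction modulo $m_i$ of the integer evaluation. Hence $\varphi(\spec) \equiv 0 \bmod m_i$, which shows $G(C) \models_{\ZZ_{m_i}} \spec$.

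For the backward direction, assume $G(C) \models_{\ZZ_{m_i}} \spec$ for all $i$, and let $\varphi$ be a model of $G(C)$ over $\ZZ$. By Lemma~\ref{lem:models}, $\varphi$ is a model over every $\ZZ_{m_i}$, so the integer $N := \varphi(\spec)$ satisfies $N \equiv 0 \bmod m_i$ for all $i$. Because the $m_i$ are pairwise coprime, Theorem~\ref{thm:crt} (uniqueness of the solution to $x \equiv 0 \bmod m_i$) gives $N \equiv 0 \bmod M$ with $M = \prod_{i=1}^k m_i$. Equivalently, $M \mid N$, since a product of pairwise coprime divisors of $N$ divides $N$. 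By hypothesis $|N| < M$, and the only multiple of $M$ with absolute value less than $M$ is $0$. So $\varphi(\spec) = 0$ and $G(C) \models_\ZZ \spec$.

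The main subtlety is the compatibility between evaluation and reduction modulo $m_i$: evaluating in $\ZZ_{m_i}$ must equal reducing the integer evaluation. This holds because evaluation at a Boolean point is a ring homomorphism $\ZZ[X] \to \ZZ$, and composing it with $\ZZ \to \ZZ_{m_i}$ gives evaluation of the image polynomial. The bound $|\varphi(\spec)| < M$ is needed only for models $\varphi$, and it is assumed for all Boolean assignments, so it certainly applies.
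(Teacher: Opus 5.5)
Your proof is correct and follows the paper's argument essentially step for step. You use Lemma~\ref{lem:models} to transfer models in both directions, reduction modulo $m_i$ for the forward direction, and the Chinese remainder theorem together with the bound $\lvert\varphi(\spec)\rvert < \prod_{i} m_i$ to force $\varphi(\spec)=0$ for the converse; your remark that evaluation commutes with reduction modulo $m_i$ just makes explicit a step the paper uses implicitly.
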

\begin{proof}

First, assume that $G(C) \models_\mathbb{Z} \spec$ holds.
Let $i \in \{1,\dots,k\}$ be arbitrary and let $\varphi$ be a model of $G(C)$ over $\mathbb{Z}_{m_i}$.
By Lemma~\ref{lem:models}, $\varphi$ is also a model of $G(C)$ over $\mathbb{Z}$.
Thus, by assumption $\varphi(\spec) = 0$ over $\mathbb{Z}$, which implies $\varphi(\spec) \equiv 0 \bmod m_i$, 
showing that $\varphi$ is also a model of $\spec$ over $\mathbb{Z}_{m_i}$.

Conversely, assume that $G(C) \models_{\mathbb{Z}_{m_i}} \spec$ holds for all $i = 1,\dots,k$.
Let $\varphi$ be a model of $G(C)$ over $\mathbb{Z}$.
Again by Lemma~\ref{lem:models}, $\varphi$ is also a model of $G(C)$ over each $\mathbb{Z}_{m_i}$.
Hence, $\varphi(\spec) \equiv 0 \bmod m_i$ for all $i$ by assumption.
By the Chinese remainder theorem, this implies 
$\varphi(\spec) \equiv 0 \bmod \prod_{i=1}^k m_i$.
Since $\lvert \varphi(\spec) \rvert < \prod_{i=1}^k m_i$, 
we have $\varphi(\spec) = 0$ over $\mathbb{Z}$, 
showing that $\varphi$ is also a model of $\spec$ over $\mathbb{Z}$.
\end{proof}

Theorem~\ref{thm:main} establishes the correctness of the multimodular verification approach, but it is deliberately agnostic about how the modular verifications
of 
$G(C) \models_{\mathbb{Z}_{m_i}}\spec$ are performed. 
In particular, this could be done by purely algebraic methods, but also using SAT-based reasoning, or any combination thereof.
In the following Section~\ref{sec:hybrid}, we will discuss how to perform the modular verification using our hybrid algebraic reasoning approach.

Note that Theorem~\ref{thm:main} does not require the moduli $m_{i}$ to be prime numbers.
It suffices that they are pairwise coprime.
Nevertheless, it is beneficial to choose the $m_i$ to be primes, so that each~$\mathbb{Z}_{m_i}$ forms a finite field, 
which simplifies subsequent computations that rely on field properties.
By using word-sized primes, all coefficients fit into native machine words, avoiding arbitrary-precision arithmetic altogether.
In particular, as witnessed by implementations of other multimodular algorithms~\cite{monagan2015compact, f4ncgb}, 
it can be beneficial to use 32-bit (resp.~16-bit) prime moduli while storing intermediate results in 64-bit (resp.~32-bit) integers.
This way several arithmetic operations can be accumulated before applying a modular reduction, thereby reducing the number of expensive modulo operations.

The required number of moduli depends on an upper bound for the absolute value of the evaluation of $\spec$ on all Boolean assignments $\varphi$.
Such a bound can be easily obtained directly from $\spec$: 
since we consider only Boolean assignments, each monomial in $\spec$ evaluates to either $0$ or $1$.
Hence, $\lvert\varphi(\spec)\vert$ is bounded by the maximum of the sum of all positive coefficients of
 $\spec$ and the absolute value of the sum of all negative coefficients of $\spec$.
For example, if $\spec$ is the specification of an $n$-bit signed or unsigned multiplier, 
then $\lvert \varphi(\spec) \rvert < 2^{2n}$ for all $\varphi$.

Consequently, for 64-bit multipliers, five 32-bit primes suffice to meet the condition of Theorem~\ref{thm:main}.
Since the verification tasks for different moduli are completely independent, they can be run in parallel, 
yielding near-linear speedups on modern multicore machines.

Finally, we note that the multimodular approach also supports the computation of \emph{witnesses} of faulty circuits.
By Lemma~\ref{lem:models}, any counterexample discovered during verification over a modulus $\mathbb{Z}_{m_i}$, that is, 
any model of $G(C)$ that violates $\spec$ modulo $m_i$, 
also forms a valid counterexample over $\mathbb{Z}$, and thus directly witnesses incorrect circuit behavior.

\section{Hybrid Algebraic Circuit Verification}
\label{sec:hybrid}

In the following, we combine the complementary strengths of linear and nonlinear rewriting
in a hybrid algebraic verification framework that also makes use of the multimodular approach presented in Section~\ref{sec:multimod}.
We also discuss implementation details of this hybrid approach in our tool \talismantwo~\cite{talismantwo} that targets verification of multiplier circuits.

\begin{figure}[tb]
\scalebox{0.70}{
\begin{tikzpicture}[
    node distance=0.7cm and 0.7cm,
    box/.style={
        draw,
        rectangle,
        align=center,
        minimum height=0.85cm,
        text width=1.8cm
    },
    smallbox/.style={box, fill=white, rounded corners=3pt},
]


\node[smallbox] (abc) {Preprocess};
\node[smallbox, right=of abc] (sub) {Extract\\Subcircuit};
\node[smallbox, double, double distance=0.8pt, right=of sub] (sample) {Sample};
\node[smallbox, double, double distance=0.8pt, right=of sample] (guess) {Guess};
\node[smallbox, double, double distance=0.8pt, right=of guess] (prove) {Prove};
\node[smallbox, right=of prove] (reduceL) {Rewrite};
\node[smallbox, double, double distance=0.8pt, below=of prove] (repair) {Repair};

\draw[->,line width=1pt] (abc) -- (sub);
\draw[->,line width=1pt] (sub) -- (sample);
\draw[->,line width=1pt] (sample) -- (guess);
\draw[->,line width=1pt] (guess) -- (prove);
\draw[->,line width=1pt] (prove) -- (reduceL);

\draw[->, bend left=25,line width=1pt] (prove) to (repair);
\draw[->, bend left=25,line width=1pt] (repair) to (prove);

\draw[->,line width=1pt,rounded corners=3pt]
  (reduceL.south)
  -- ++(0,-1.75)
  -- ($(sub.south)+(0,-1.75)$)
  -- (sub.south);

\begin{scope}[on background layer]
\node[
  draw,
  rectangle,
  blue!65,
  fill=blue!10,
  rounded corners=3pt,
  fit=(abc)(sub)(sample)(guess)(prove)(reduceL)(repair),
  inner sep=12pt
] (linear) {};
\end{scope}

\node[
  draw=blue!75,
  fill=blue!75,
  rounded corners=3pt,
  inner sep=4pt,
  anchor=center,
  font=\small\bfseries
] at (linear.north)
{\color{white}Linear Rewriting (Sec.~\ref{ssec:linear})};


\node[smallbox, below=1cm of linear] (simplify) {Simplify};
\node[smallbox, right=of simplify] (reduceN) {Rewrite};

\draw[->,line width=1pt] (simplify) -- (reduceN);

\draw[->,line width=1pt,rounded corners=3pt] (reduceL.south) -- ++(0,-1.75) -- ($(sub.south) + (0,-1.75)$) -- ($(sub.south) + (0,-3.45)$) -- (simplify.west);

\begin{scope}[on background layer]
\node[
    draw,
    rectangle,
    rounded corners=3pt,
    blue!75,
    fill=blue!10,
    fit=(simplify)(reduceN),
    inner sep=12pt
] (nonlinear) {};
\end{scope}

\node[
  draw=blue!75,
  fill=blue!75,
  rounded corners=3pt,
  inner sep=4pt,
  anchor=center,
  font=\small\bfseries
] at (nonlinear.north)
{\color{white}Nonlinear Rewriting (Sec.~\ref{ssec:nonlinear})};


\node[
    box,
    text width=2.4cm,rounded corners=3pt,
    fill=blue!10,
    below=2.6cm of abc
] (enc) {\vspace{4pt}\\ 
Circuit \& Spec\\ encoding};

\node[
  draw=blue!75,
  fill=blue!75,
  rounded corners=3pt,
  inner sep=4pt,
  anchor=center,
  font=\small\bfseries
] at (enc.north)
{\color{white}Input};

\draw[->,line width=1pt]
    ($(enc.north) + (0,0.27)$)
    to
    (abc.south);

\draw[->,line width=1pt] ($(enc.south)-(0,0.5)$) -- (enc.south);

\node[draw,green!60!black,line width=1pt,rounded corners=3pt,
      right=1.2cm of nonlinear,yshift=0.3cm,
      text=green!60!black] (success)
      {Circuit Correct\phantom{I} \ding{51}};

\node[draw,red!60!black,line width=1pt,rounded corners=3pt,
      right=1.2cm of nonlinear,yshift=-0.3cm,
      text=red!60!black] (fail)
      {Circuit Incorrect \ding{55}};

\draw[->,line width=1pt] ([xshift=12pt]reduceL.south) -- ([xshift=14pt]success.north); 
\draw[->,line width=1pt] ($(reduceN.east) + (0,0.3)$) -- (success.west); 
\draw[->,line width=1pt] ($(reduceN.east) + (0,-0.3)$) -- (fail.west);

\end{tikzpicture}
}
\caption{Overview of the \talismantwo workflow.
Double-edged boxes are computations that are run in parallel in \talismantwo.}
\label{fig:flowchart}
\vspace{-3ex}
\end{figure}
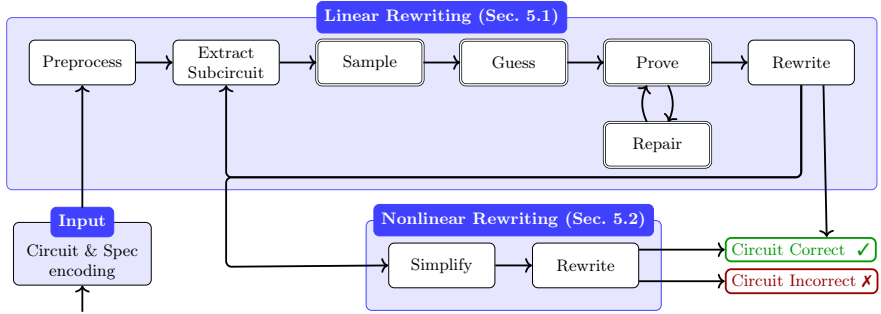 

The overall workflow is shown in Figure~\ref{fig:flowchart}.
Given an input AIG~$C$, let $X$ be the set of all input, output and internal gate variables of $C$. 
We encode the gate constraints $G(C)$ and the Boolean value polynomials $B(C)$ as described in Section~\ref{ssec:prelim-aig}. 
A given specification $\spec \in \mathbb{Z}[X]$ is linearized using extension variables~\cite{Kaufmann-TACAS25,hofstadler_et_al:LIPIcs.CP.2025.14}:
for every nonlinear monomial $m$ in $\spec$, we introduce a new variable $v \notin X$ that acts as a
placeholder. The corresponding constraints $E(C) = \{v-m \mid m
\in \spec \land \deg(m) > 1 \land v \notin X\}$ are added to the circuit
encoding $G(C) \cup B(C) \cup E(C)$. 

Verification then proceeds in two phases. 
We first apply linear rewriting as long as linear relations can be derived efficiently. 
Once this becomes ineffective, we switch to nonlinear rewriting. 
Crucially, both phases are executed over multiple prime fields using homomorphic images, 
ensuring that all arithmetic remains word-sized and that candidate relations can be validated efficiently.
The internal structure of both phases is explained in the following sections.

We note that the approach also works with non-prime moduli, since correctness ultimately relies on Theorem~\ref{thm:main}, which only requires pairwise coprime moduli. 
However, we use primes in practice as they ensure that each residue ring is a field. 
This is particularly useful in the linear rewriting phase, where systems of linear equations are solved to detect candidate relations (see the following section).
While this can also be done over composite moduli (e.g., via Hermite or Smith normal form), Gaussian elimination over prime fields is more efficient.

\subsection{Linear Rewriting} 
\label{ssec:linear}

Linear rewriting aims to simplify the specification polynomial by iteratively
deriving and applying linear relations implied by the circuit. After encoding
and linearizing the specification, we first perform a preprocessing step to
identify recurring arithmetic structures that induce linear constraints. The
resulting relations are collected for subsequent use in rewriting.

The main procedure then iteratively considers the current leading monomial of
the specification polynomial.
To determine the leading monomial we use a reverse topological ordering of the variables.
If a previously derived linear relation is
applicable, it is used to rewrite the monomial. Otherwise, we select a suitable
subcircuit and guess a candidate linear relation suggested by its structure. The
guessed relation is then formally proved correct with respect to the circuit
semantics before it is applied for rewriting.
If no suitable relation can be found efficiently, we fall back to nonlinear rewriting.

\medskip
\noindent
\textbf{Preprocessing.}\;
Many linear relations in arithmetic circuits originate from small building
blocks such as half adders (HA) or full adders (FA). For example, a FA with
inputs $x,y,z$, sum $s$, and carry $c$ satisfies the linear relation $2c+s-x-y-z
= 0$. Our first goal is to identify all half and full adders in the AIG. Instead
of relying on syntactic pattern matching like in previous work~\cite{KaufmannBiereKauers-FMCAD19,KaufmannBeameBiereNordstrom-DATE22}, 
which may be obscured by synthesis, we
detect arithmetic structure semantically using cut enumeration (up to three
leaves) and local truth tables, which we have implemented in our own modified
version of \abc~\cite{abc}. For each detected HA/FA, the corresponding linear
polynomial is constructed and cached for use during linear rewriting.

Beyond local half and full adders, verification based on linear rewriting also
requires us to identify higher-level adder blocks that are used to sum
bit-vectors, such as the final-stage adder (FSA) in multipliers. Since the
linear relations of the FSA generally depend on its global structure, it is
necessary to identify a subcircuit that captures the FSA in its entirety. 
During preprocessing, we use our modified version of \abc to identify
an approximate FSA subcircuit by traversing the circuit backwards 
from the outputs until reaching previously detected HA/FA blocks, which form a 
structural boundary between the FSA and the remaining circuit.
Deriving a topological closure ensures that the resulting
boundary forms a valid cut. Since the input size of the FSA easily exceeds 64
nodes, we do not attempt to derive its linear relations using truth tables as we
did for FA/HA. Instead, they are derived later in the guess-and-prove phase,
where we treat the FSA as a single extracted subcircuit. Hence, this
approximation may safely over-approximate the~FSA.


\medskip
\noindent
\textbf{Extract Subcircuit.}\;
During rewriting, we consider the leading monomial of the current specification. 
If a cached linear polynomial with matching leading monomial exists, we immediately apply it for a rewrite step. 
Otherwise, we attempt to derive such a relation via guess-and-prove on a suitable subcircuit.
If the corresponding gate is part of the FSA approximation, we consider the entire FSA subcircuit.
If not, we extract a depth-bounded subcircuit rooted at that gate. Notably, all nodes whose children are contained in the 
subcircuit will also be added to the subcircuit. Hence, the subcircuit may have more than one output node. 
If no suitable linear relation is found, we increase the depth bound and repeat the procedure with an enlarged subcircuit.
In the worst case we would need to increase the subcircuit until it covers the whole circuit. 

\medskip
\noindent
\textbf{Sample.}\;
To infer candidate relations in a suitable subcircuit, we generate samples of assignments to the subcircuit variables. 
In our previous work~\cite{hofstadler_et_al:LIPIcs.CP.2025.14}, we relied on uniform
sampling over the inputs, with the values of internal gate variables being determined 
by propagation through the circuit structure. 
While this approach is simple and fast, it can miss rare but semantically relevant behaviors. 
This limitation becomes apparent in subcircuits that include long sequences
of AND gates. The output of an $n$-ary AND gate $x = x_0 \land \dots \land x_{n-1}$ is $1$ if and only if all $n$ inputs $x_i$
are set to $1$. 
Under uniform input sampling, this assignment occurs with probability
$2^{-n}$. 
As a result, the guessing procedure most likely observes only assignments for
which $x = 0$, leading to incorrect guesses.

To address this issue, we replace uniform input sampling with the weighted
uniform-like sampler \cmsgen~\cite{weightedsampling}. Unlike uniform sampling
over subcircuit inputs, \cmsgen constructs uniformly sampled assignments over \emph{all}
circuit variables.
It does this incrementally by propagating implications from the current assignment and learning from encountered conflicts.
This significantly improves coverage of rare behaviors as the results in Section~\ref{sec:exp} show.
We typically generate three times as many samples as the subcircuit size.

\medskip
\noindent
\textbf{Guess.}\;
After sampling, the candidate linear relations modulo a prime $p$ are
computed by solving the linear system $Ax\equiv 0$ over $\mathbb{Z}_p$, 
where the columns of the matrix $A$ correspond to variables and the rows to samples,
cf.~\cite[Alg.~2]{hofstadler_et_al:LIPIcs.CP.2025.14}. 

In \talismantwo, we now apply a trick to speed up this linear algebra step. 
The matrices arising in
our context have a special structure: they contain only entries from $\{0,1\}$
and they have a lot more rows than columns. We exploit this structure by solving
the system $(BA)x \equiv 0$ instead of $Ax \equiv 0$, where $B$ is a randomly chosen
$\{0,1\}$-matrix. Since both $A$ and $B$ have entries in $\{0,1\}$, the product
$BA$ can be computed efficiently (over $\mathbb{Z}_{p}$) using
bitwise operations. By choosing $B$ to have fewer rows than $A$, the resulting
system becomes smaller and thus easier to solve. In our implementation, $B$ is
chosen so that $BA$ is a square matrix. The resulting (square) systems are
solved using the number theory library \flint~\cite{flint}. 
Every solution of the original system $Ax \equiv 0$ is also a solution of $(BA)x \equiv 0$,
and if $B$ is chosen randomly, then the two solution spaces coincide with high probability.
Any spurious solutions introduced by the projection would be eliminated during the subsequent prove step.
In our experiments, we never encountered such spurious solutions.

\medskip
\noindent
\textbf{Prove \& Repair.}\;
We prove the correctness of the linear guesses using SAT solving.
To establish that a linear guess $f$ is correct over $\mathbb{Z}_{p}$, we must
show that for all models $\varphi$ of the extracted subcircuit, $\varphi(f) \equiv 0 \bmod p$. 
We translate each gate of the extracted subcircuit into conjunctive normal form (CNF) using Tseitin transformation, 
and encode $f \neq 0$ in CNF using
\pblib~\cite{pblib.sat2015}. We then prove that the formula is unsatisfiable using
the SAT solver~\cadical~\cite{BiereFallerFazekasFleuryFroleyks-CAV24}. 

However, 
encoding $f \neq 0$ alone is actually too weak in our modular setting, since $f$ may evaluate to a non-zero
multiple of $p$. Therefore, whenever the SAT solver returns a satisfying
assignment $\varphi$, we explicitly check whether $\varphi(f) \equiv~0 \bmod p$. If this is
the case, the assignment is excluded from
the search space and the SAT call is repeated.
We have experimented with directly  encoding $f \neq 0 \land f \neq p \land \dots \land f \neq lp$ for a suitable upper bound $l$. 
However, this caused a significant overhead, even for $l = 2$.
Additionally, we also experimented with mixed-integer programming, but did not observe any significant difference.

Only if we obtain a satisfying assignment with $\varphi(f) \not\equiv 0 \bmod p$ do we accept it as a witness that the guess $f$ is incorrect. 
We add $\varphi$ as a new sample to the matrix $BA$ to repair the linear guesses and re-solve the updated linear system $(BA)x \equiv 0$. 
The resulting refined guesses must then be verified again.

\medskip
\noindent
\textbf{Rewrite.}\;
Once all correct guesses have been collected, we use them to rewrite the specification as far as possible.
In \talismantwo, we perform rewriting steps modulo different primes simultaneously by representing each coefficient
of a multimodular polynomial $f$ as a vector, where the $i$th entry stores the coefficient of $f$ modulo the $i$th prime.
Thus, monomials have to be stored only once and coefficient arithmetic can be efficiently vectorized using SIMD 
instructions.

\begin{example}\label{ex:rewrite}
Suppose we want to rewrite the specification
$
\spec = -5x + 4b + 3a
$
using the linear polynomial
$
f = 3x + b - a
$
over the three prime moduli
$
\mathcal{M} = (7, 11, 13).
$
We first represent $\spec$ componentwise as
$
\spec = (2,6,8)x
+ (4,4,4)b
+ (3,3,3)a,
$
since
$
-5 \equiv (2,6,8) \bmod{\mathcal{M}}.
$
Say $x$ is the leading monomial of $\spec$ and $f$.
Then we first normalize~$f$ by inverting its $x$-coefficient modulo each prime.
Multiplying $f$ componentwise by $(5,4,9)$ yields $f' = (1,1,1)x +(5,4,9) b + (2,7,4)a $.
Finally, we compute the following difference to cancel the $x$-term from $\spec$:
$$
	\spec - (2,6,8) \cdot f' \equiv (1,2,10)b + (6,5,10)a  \mod{\mathcal{M}}.
$$
\end{example}

If the result of such a linear rewriting step is zero, the circuit satisfies the specification and the procedure terminates.
Otherwise, we consider the leading monomial of the simplified specification and start the next iteration of the linear rewriting loop.
If we fail to retrieve a linear relation whose leading monomial matches that
 of the specification, we increase the depth of the subcircuit to be
extracted. To prevent the subcircuit from growing excessively, we switch to
nonlinear rewriting if no suitable linear relation is found after increasing the
subcircuit depth a fixed number of times. In \talismantwo, we
set this threshold to three, as determined empirically.

\medskip
\noindent
\textbf{Parallelization.}\;
A key property of our multimodular framework is that computations for different prime moduli are independent. 
Although this would allow the entire workflow in Figure~\ref{fig:flowchart} to be executed in parallel, 
such a coarse-grained strategy would duplicate substantial work, since several phases (e.g., preprocessing, subcircuit extraction, and sampling) 
are identical across primes.
Hence, we apply a more fine-grained parallelization scheme. 
Circuit encoding, preprocessing, and subcircuit extraction are done in serial and shared across primes, 
while the guess, prove, and repair steps are parallelized across moduli, with each thread handling one prime.
Sampling is parallelized as well, but independently of the primes. 
Here we use multiple threads to generate samples for the same subcircuit and aggregate them for guessing.  
Linear and nonlinear rewriting are executed serially, as further parallelization did not yield any speedups.

\subsection{Nonlinear Rewriting}
\label{ssec:nonlinear}

Once we switch from linear to nonlinear rewriting, we fix a lexicographic monomial
order that orders the variables following a reverse topological order of the circuit, starting from the output variables and
progressing toward the inputs. 
Under this ordering, each gate polynomial has a leading monomial consisting of a unique variable.
This implies that $G(C) \cup B(C)$ forms a Gröbner basis~\cite{KaufmannBiereKauers-FMSD19}.

\medskip
\noindent
\textbf{Simplify.}\; We first simplify the AIG to reduce the number of
rewriting steps. To this end, we eliminate intermediate nodes that have single fanouts.
Specifically, if a node $g$ only has one parent, 
we directly substitute the gate constraint of $g$ into that of its parent. 
We repeat this process until no such nodes remain. The gate polynomials of the rewritten AIG still form a Gröbner basis~\cite{KaufmannBiereKauers-FMSD19}.

\medskip
\noindent
\textbf{Rewrite.}\;
We iteratively rewrite the specification polynomial using the simplified gate polynomials, thereby computing the unique normal form modulo the
circuit ideal.
The fact that $G(C) \cup B(C)$ forms a Gröbner basis establishes the
soundness and completeness of nonlinear rewriting for deciding ideal membership. 
If the normal form of $\spec$ is the zero polynomial, then the
circuit fulfills the specification and is therefore correct.

Otherwise, $\spec \notin \langle G(C) \cup B(C) \rangle$. In this case, the
normal form consists only of primary inputs of the AIG. A counterexample can
therefore be extracted by choosing an input assignment for which the normal form
does not evaluate to zero. Deriving counterexamples is not immediately possible
in purely linear rewriting
approaches~\cite{Kaufmann-TACAS25,hofstadler_et_al:LIPIcs.CP.2025.14}, as the
normal form with respect to a degree-compatible order may still contain
internal circuit variables.

Nonlinear rewriting is performed analogously to the linear case
by representing coefficients modulo different primes by coefficient vectors (cf.~Example~\ref{ex:rewrite}).

\section{Experiments}
\label{sec:exp}

\begin{figure}[tb]
  \centering
    \includegraphics[width=0.85\columnwidth]{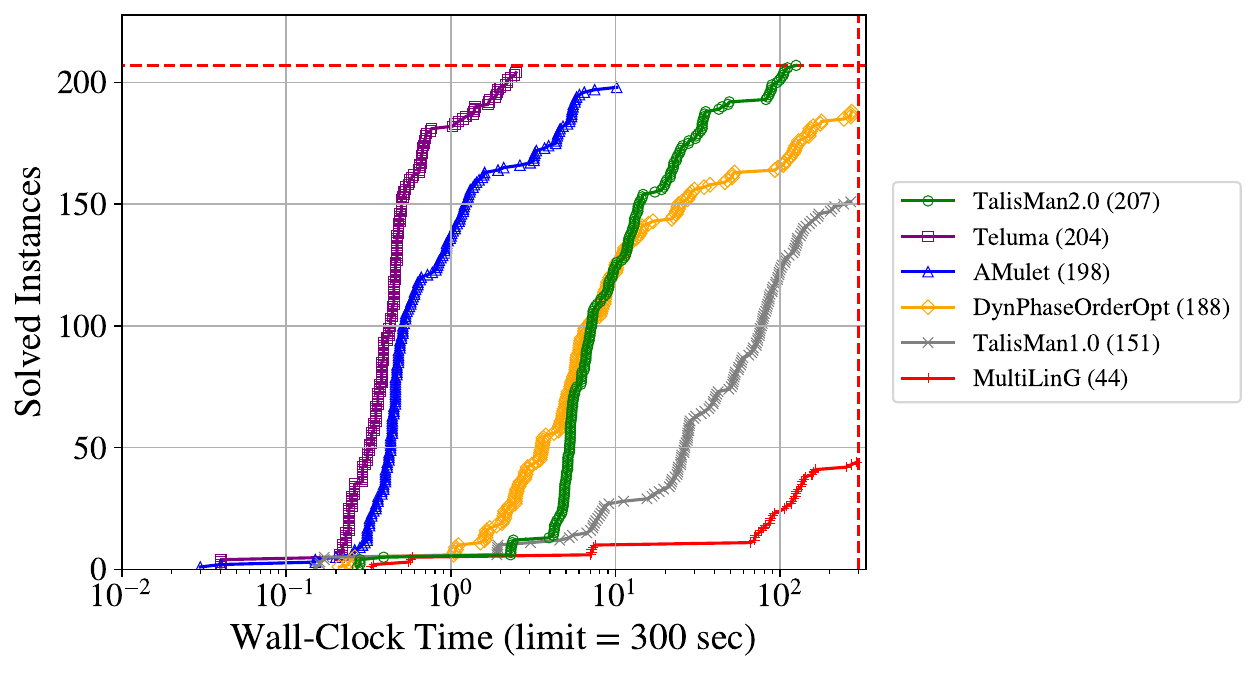}
      \caption{Comparison of \talismantwo and related work. For each tool the number of solved instances is added in parentheses.}
      \label{fig:related}
  \vspace{-3ex}
\end{figure}

We have implemented the presented hybrid multimodular approach in our tool called~\talismantwo~\cite{talismantwo}, 
which acts as an improved successor of~\talisman~\cite{hofstadler_et_al:LIPIcs.CP.2025.14}. 
In the following, we evaluate~\talismantwo~and compare it to related
work~\cite{hofstadler_et_al:LIPIcs.CP.2025.14,Kaufmann-TACAS25, KaufmannBiere-TACAS21,
KaufmannBeameBiereNordstrom-DATE22, KonradScholl-FMCAD24} using a total of 207 \emph{correct} integer multiplier benchmarks. 

We consider two types of benchmarks: \emph{structured
circuits}, where the components of a multiplier (partial product generation (PPG), partial product accumulation (PPA), and the final stage adder (FSA)) 
are clearly separable,
and \emph{synthesized circuits}, where gates are merged and rewritten to optimize the
circuit, blurring component boundaries and complicating direct verification. We
consider the following sets:

\begin{enumerate}
\item \emph{Structured aoki-multipliers}~\cite{aokipaper}: This set of 64-bit multipliers is generated by  
combining different architectures for PPG, PPA, and FSA (see Appendix~\ref{app:bench-details})
 yielding 192 structured multiplier architectures.

\item \emph{Synthesized \abc-multipliers}~\cite{abc}: 
We generate  32-, 64-, 128-bit multipliers of type \emph{sp-ar-rc} and optimize them with \abc
using four standard synthesis scripts (resyn, resyn2, resyn3, dc2)
and a complex optimization script\footnote{\tiny\texttt{-c "logic;
      mfs2 -W 20; ps; mfs; st; ps; dc2 -l; ps; resub -l -K 16 -N 3 -w
      100; ps; logic; mfs2 -W 20; ps; mfs; st; ps; iresyn -l; ps;
      resyn; ps; resyn2; ps; resyn3; ps; dc2 -l; ps;"}}. 
These 15 optimized benchmarks showcase the robustness of our approach. 
     
\end{enumerate}

In our experiments, we use primes of approximately 16 bits, 
which provide a good trade-off between keeping the number of required moduli small and enabling efficient coefficient encoding in the SAT solver.
Verifying 64-bit multipliers requires 8 such primes, while 128-bit multipliers require 16. 
All parallel computations use as many threads as we have primes.

The experiments were run  on dual-socket AMD EPYC 7313 @3.7GHz
machines running Ubuntu 24.04. Reported times are rounded wall-clock seconds. 
The time limit was \SI{300}{\second} and the memory limit was \SI{32}{\giga\byte} to ensure sufficient resources for all tools. 
Notably, \talismantwo itself only required at most~\SI{5}{\giga\byte}.

We compare \talismantwo against tools from related
work: \talisman~\cite{hofstadler_et_al:LIPIcs.CP.2025.14,talisman-artifact} and
\multiling~\cite{Kaufmann-TACAS25}, both of which also apply linear rewriting.
In addition, we consider \dynphaseorderopt~\cite{KonradScholl-FMCAD24},
\amulettwo~\cite{KaufmannBiereKauers-FMCAD19,KaufmannBiere-TACAS21}, and
\teluma~\cite{KaufmannBeameBiereNordstrom-DATE22}, all of which use nonlinear
rewriting based on a lexicographic monomial order. Figure~\ref{fig:related}
summarizes the comparison. 
Our main observations are:

\begin{itemize}
\item \talismantwo is the only tool that successfully solves all benchmarks. 
Increasing the time limit did not yield more successful runs for the other tools.

\item \amulettwo and \teluma solve the complete aoki benchmark set but fail on some of the synthesized ABC benchmarks. 
This supports our claim that their syntactic heuristics do not generalize well.

\item \dynphaseorderopt solves all 15 synthesized ABC benchmarks, but only 173 out of 192 aoki benchmarks. Notably, among the 19 benchmarks it fails on, 14 contain a carry look-ahead adder (cl) as FSA, indicating that this structure 
is hard for purely nonlinear rewriting approaches.  

\end{itemize}

\begin{figure}[tb]
  \centering 
  \begin{minipage}{0.32\textwidth}
      \centering
       \includegraphics[width=0.95\columnwidth]{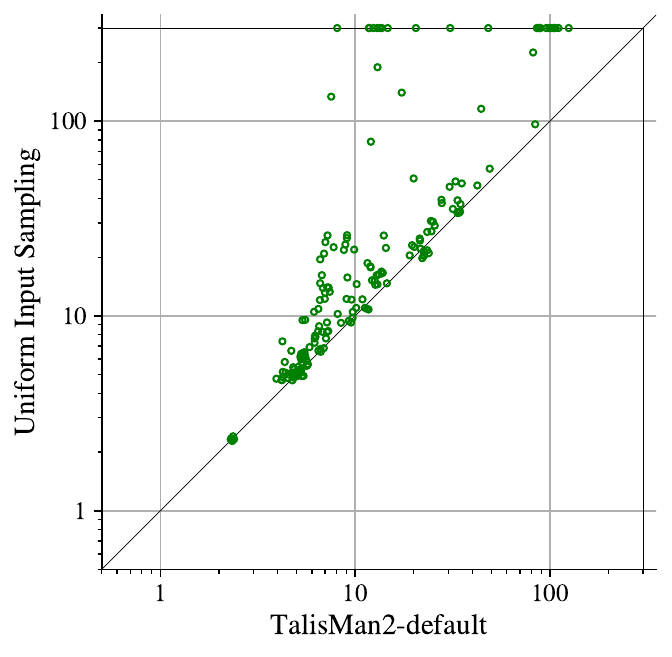}

  \end{minipage}
  \hfill
  \begin{minipage}{0.32\textwidth}
    \centering
       \includegraphics[width=0.95\columnwidth]{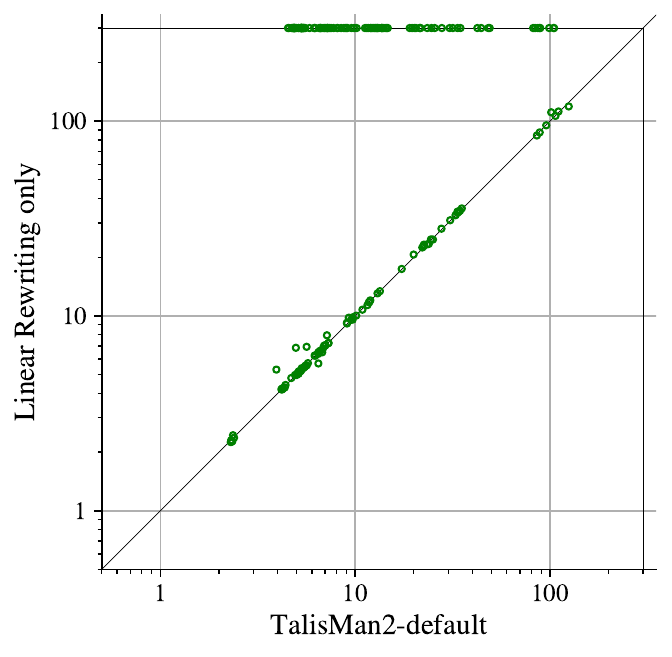}
       
  \end{minipage}
    \hfill
  \begin{minipage}{0.32\textwidth}
    \centering
       \includegraphics[width=0.95\columnwidth]{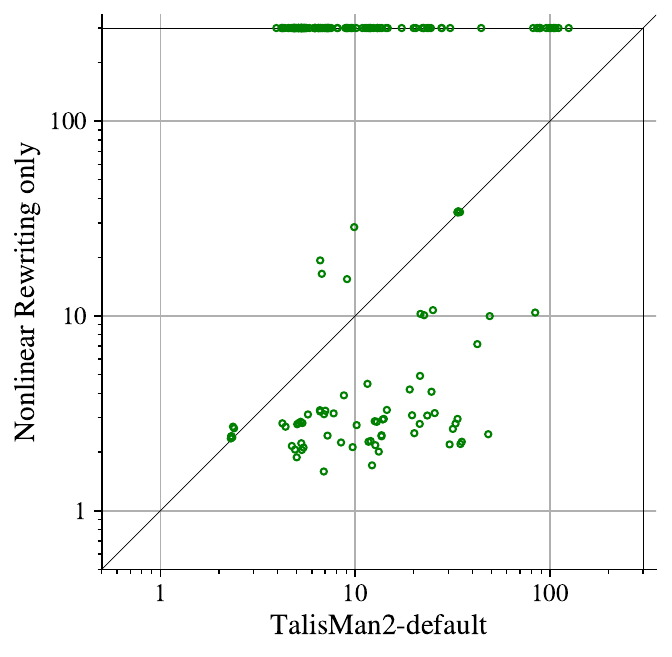}
  \end{minipage}
  \caption{Comparing the default setting when either uniform input sampling (left), only linear rewriting (middle), or only nonlinear rewriting is used (right).}
  \label{fig:ablation}
  \vspace{-1ex}
\end{figure}

\renewcommand{\tabcolsep}{2.5pt}
{\renewcommand{\arraystretch}{1.1}
\begin{table}[tb]
  \centering
  \scriptsize
  \begin{tabular}{l|r |r|r|r|r|r|r|r||r|r }
    \multirow{3}{*}{Name} & \multirow{3}{*}{Total}
      & \multicolumn{9}{c}{\talismantwo\ --- Phase Times (s)} \\
       \cline{3-11}
      && \multicolumn{7}{c||}{Linear} & \multicolumn{2}{c}{Nonlinear}  \\
    \cline{3-11}
      && Prepr.
      & Extract 
      & Sample 
      & Guess* 
      & Prove*
      & Repair* 
      & Rewrite
      & Simplify 
      & Rewrite  \\
    \midrule
abc64-cmp & 2.3 & 2.1 & <0.05& <0.05& <0.05& <0.05& - & <0.05& - & - \\   
    \midrule
bp-ar-cl & 44.0 & 1.4 & 0.7 & 0.9 & 9.7 & 26.6 & 3.0 & 0.1 & <0.05& 0.2 \\
bp-ba-cn & 42.0 & 1.6 & <0.05& 2.3 & 23.0 & 3.6 & 5.3 & <0.05& 4.8 & 0.4 \\
bp-bd-cl & 86.2 & 1.4 & 0.9 & 1.6 & 21.6 & 50.1 & 8.3 & 0.2 & <0.05& 0.2 \\
bp-cn-cs & 13.5 & 1.5 & <0.05& 1.4 & 7.9 & 1.6 & - & <0.05& 0.2 & 0.4 \\
bp-ct-hc & 9.0 & 0.9 & 2.1 & 0.2 & 3.9 & 1.1 & - & 0.1 & <0.05& 0.2 \\
bp-dt-lf & 7.1 & 1.3 & 0.5 & 0.2 & 3.3 & 1.3 & - & 0.1 & <0.05& 0.2 \\
bp-wt-rc & 4.8 & 1.4 & 0.5 & 0.1 & 1.1 & 1.2 & - & 0.1 & <0.05& 0.2 \\
\midrule 
sp-ar-cl & 19.7 & 2.2 & 0.1 & 0.4 & 3.1 & 12.6 & 0.6 & 0.1 & - & - \\
sp-ba-cn & 48.6 & 2.7 & 0.1 & 2.8 & 25.0 & 7.1 & 2.8 & <0.05& 6.7 & 0.3 \\
sp-bd-cl & 88.1 & 2.3 & 0.3 & 1.6 & 19.8 & 54.4 & 7.0 & 0.2 & - & - \\
sp-cn-cs & 21.4 & 2.7 & <0.05& 1.3 & 7.9 & 1.8 & - & <0.05& 0.3 & 6.8 \\
sp-ct-hc & 11.9 & 1.6 & 4.2 & 0.2 & 4.1 & 1.3 & - & 0.2 & - & - \\
sp-dt-lf & 6.8 & 2.1 & <0.05& 0.2 & 2.8 & 1.4 & - & 0.1 & - & - \\
sp-wt-rc & 5.0 & 2.3 & <0.05 & 0.2 & 0.9 & 1.4 & - & 0.1 & - & - \\
    \bottomrule
  \end{tabular}
  \caption{Per-phase runtime (rounded to tenths of seconds) of \talismantwo. 
  In columns marked with `*' we report the average time per prime modulus.
  Entries `-' indicate that this phase was not entered. 
  }
  \label{tbl:statistics}
  \vspace{-5ex}
\end{table}
}

We highlight the runtimes of the different phases of \talismantwo from  Figure~\ref{fig:flowchart} in Table~\ref{tbl:statistics}.
A complete table is contained in Appendix~\ref{app:exp-details}.
Notably, we only had to enter the repair phase in 36 out of 207 benchmarks.
In all other cases the guesses were immediately correct. 
Almost all of those benchmarks have either a carry look-ahead (cl) or a conditional sum adder (cn) as the FSA.

To assess the contribution of individual components of \talismantwo, we conducted
an ablation study in which selected improvements were disabled. The results
are shown in Figure~\ref{fig:ablation}. When disabling \cmsgen-based sampling and
instead using uniform input sampling as in \talisman (left plot), \talismantwo
fails to solve 24 benchmarks and exhibits increased runtimes on almost all
remaining instances. Restricting the approach to linear rewriting only (middle
plot) reduces the number of solved benchmarks to 98. 
Conversely, when only nonlinear
rewriting is applied (right plot), 79 benchmarks can be solved; however, within
this subset, \talismantwo is faster than the hybrid configuration in 63 cases,
indicating that nonlinear rewriting is particularly effective on selected
instances but lacks overall robustness when used in isolation.

We evaluated the impact of replacing our custom linear algebra routines by \flint, which increases the overall runtime by 34\% on average.
We also investigated whether using (pseudo-)Mersenne primes, i.e., primes of the form $2^{k} - c$ with small $c$, improves performance in the proving phase. 
Our intuition was that guessed linear relations often contain coefficients that are powers of two. 
Modulo (pseudo-)Mersenne primes, such coefficients map to small residues, which could simplify the SAT-based verification
as smaller integers lead to smaller propositional encodings. 
To test this conjecture, we compared the default configuration of \talismantwo using only pseudo-Mersenne primes in the range $2^{16} \pm 50$ with a configuration using randomly selected primes from the interval $[2^{15}, 2^{17}]$. The latter exhibited only a negligible average slowdown of $1.2\%$.
We have also evaluated the impact of different prime sizes. Using 8-bit primes results in an average slowdown of $27\%$ compared to the 16-bit default, likely due to increased memory traffic from more parallel threads. In contrast, 31-bit primes lead to a slowdown of $394\%$, with only 161 out of 207 benchmarks solved within the \SI{300}{\second} time limit. 
This significant degradation is caused by substantially higher SAT verification costs due to larger coefficients in the guesses.

\section{Conclusion \& Future Work}
We presented a hybrid multimodular framework for word-level verification of
arithmetic circuits that completely avoids arbitrary-precision arithmetic. By
performing algebraic reasoning in parallel over word-sized prime fields and
combining linear with nonlinear rewriting, our approach achieves both efficiency and robustness. 
Our implementation in \talismantwo shows that this design significantly improves performance in practice. 

An important next step is the integration of proof logging techniques such as~\cite{KFBK24,KH25} into our framework. 
In particular, recycling proof steps as proposed in~\cite{KH25} appears especially promising in the multimodular setting. 
We also plan to investigate methods for generating short proofs of ideal membership~\cite{HV24}. 
Beyond circuit verification, we expect that our techniques can be applied also to related problems such as equivalence checking.


\begin{credits}
  \subsubsection{\ackname}
  C.H. was supported by the LIT AI Lab funded by the state of Upper
  Austria and by the Austrian Science Fund (FWF) [10.55776/COE12].
   D.K. was supported by the Austrian Science Fund (FWF) under
  grant [10.55776/ESP666]. 

  \subsubsection{\discintname}
  The authors have no competing interests to declare that are relevant to the
  content of this article.
\end{credits}

%
%
%
\bibliographystyle{splncs04}
\bibliography{references}
\newpage
\appendix

\section{Benchmark Details}\label{app:bench-details}

The components of the \emph{aoki}-benchmarks are: 
\begin{itemize}
\item PPG: simple (sp), Booth encoding radix-4 (bp);
\item PPA: Array                         (ar),
Wallace tree                  (wt),
Balanced delay tree           (bd),
Overturned-stairs tree        (os),
Dadda tree                    (dt),
(4;2) compressor tree         (ct),
(7,3) counter tree            (cn),
Red. binary addition tree     (ba);
\item FSA:
Ripple-carry                  (rc),
  Carry look-ahead              (cl),
  Ripple-block carry look-ahead (rb),
 Block carry look-ahead        (bc),
  Ladner-Fischer                (lf),
  Kogge-Stone                   (ks),
  Brent-Kung                    (bk),
  Han-Carlson                   (hc),
 Conditional sum               (cn),
  Carry select                  (cs),
  Carry-skip fix size           (csf),
  Carry-skip var. size          (csv)
\end{itemize}
 All of these circuits have an input bit-width of  64 and consist of \num{38 000} to \num{52 000} nodes.

\section{Experimental Data}\label{app:exp-details}

\renewcommand{\tabcolsep}{3pt}
{\renewcommand{\arraystretch}{1.2}
 \scriptsize
\begin{longtable}{l|r|r|r|r|r|r|r|r||r|r}

  \caption{Per-phase runtime (rounded to tenths of seconds) of \talismantwo. 
  In columns marked with `*' we report the average time per prime modulus.
  Entries `-' indicate that this phase was not entered. 
  }\\
  
    \multirow{3}{*}{Name} & \multirow{3}{*}{Total }
      & \multicolumn{9}{c}{\talismantwo\ --- Phase Times (s)} \\
       \cline{3-11}
      && \multicolumn{7}{c||}{Linear} & \multicolumn{2}{c}{Nonlinear}  \\
    \cline{3-11}
      && Prepr.
      & Extract 
      & Sample 
      & Guess* 
      & Prove*
      & Repair* 
      & Rewrite
      & Simplify 
      & Rewrite  \\
    \midrule
    \endhead

    \bottomrule
    \endfoot 
    bp-ar-bc & 5.2 & 1.5 & 0.7 & 0.1 & 1.1 & 1.3 & - & 0.1 & <0.05& 0.2 \\
bp-ar-bk & 4.7 & 1.5 & 0.5 & 0.1 & 1.0 & 1.1 & - & 0.1 & <0.05& 0.2 \\
bp-ar-cl & 44.0 & 1.4 & 0.7 & 0.9 & 9.7 & 26.6 & 3.0 & 0.1 & <0.05& 0.2 \\
bp-ar-cn & 48.0 & 1.6 & <0.05& 2.3 & 26.7 & 3.2 & 12.8 & <0.05& 0.4 & 0.3 \\
bp-ar-cs & 8.4 & 1.5 & <0.05& 0.7 & 4.2 & 1.2 & - & <0.05& 0.2 & 0.1 \\
bp-ar-csf & 4.4 & 1.5 & 0.4 & 0.1 & 0.7 & 1.2 & - & 0.1 & <0.05& 0.2 \\
bp-ar-csv & 4.5 & 1.5 & 0.5 & 0.1 & 0.7 & 1.2 & - & 0.1 & <0.05& 0.2 \\
bp-ar-hc & 5.5 & 1.5 & 0.4 & 0.2 & 1.9 & 0.9 & - & 0.1 & <0.05& 0.2 \\
bp-ar-ks & 8.0 & 1.5 & 0.5 & 0.3 & 4.1 & 1.1 & - & 0.1 & <0.05& 0.2 \\
bp-ar-lf & 5.3 & 1.4 & 0.5 & 0.1 & 1.6 & 1.0 & - & 0.1 & <0.05& 0.2 \\
bp-ar-rb & 5.2 & 1.5 & 0.5 & 0.2 & 1.4 & 1.2 & - & 0.1 & <0.05& 0.2 \\
bp-ar-rc & 4.7 & 1.5 & 0.4 & 0.1 & 0.9 & 1.1 & - & 0.1 & <0.05& 0.2 \\
bp-ba-bc & 5.7 & 1.5 & 0.5 & 0.2 & 1.7 & 1.4 & - & 0.1 & <0.05& 0.2 \\
bp-ba-bk & 6.1 & 1.4 & 0.5 & 0.2 & 2.1 & 1.2 & - & 0.1 & <0.05& 0.2 \\
bp-ba-cl & 104.0 & 1.5 & 1.0 & 1.9 & 26.3 & 61.3 & 9.6 & 0.2 & <0.05& 0.2 \\
bp-ba-cn & 42.0 & 1.6 & <0.05& 2.3 & 23.0 & 3.6 & 5.3 & <0.05& 4.8 & 0.4 \\
bp-ba-cs & 13.5 & 1.6 & <0.05& 1.3 & 8.0 & 1.6 & - & <0.05& 0.4 & 0.1 \\
bp-ba-csf & 5.1 & 1.5 & 0.5 & 0.1 & 1.0 & 1.3 & - & 0.1 & <0.05& 0.2 \\
bp-ba-csv & 5.0 & 1.5 & 0.5 & 0.1 & 1.0 & 1.3 & - & 0.1 & <0.05& 0.2 \\
bp-ba-hc & 8.8 & 1.5 & 0.6 & 0.3 & 4.4 & 1.4 & - & 0.1 & <0.05& 0.2 \\
bp-ba-ks & 27.6 & 1.5 & 1.0 & 2.0 & 15.8 & 5.6 & 0.2 & 0.1 & <0.05& 0.2 \\
bp-ba-lf & 9.8 & 1.5 & 0.5 & 0.3 & 4.3 & 1.4 & 1.1 & 0.1 & <0.05& 0.2 \\
bp-ba-rb & 7.1 & 1.5 & 0.5 & 0.2 & 2.5 & 1.3 & - & 0.1 & <0.05& 0.2 \\
bp-ba-rc & 5.2 & 1.5 & 0.5 & 0.2 & 1.2 & 1.3 & - & 0.1 & <0.05& 0.2 \\
bp-bd-bc & 5.2 & 1.4 & 0.5 & 0.2 & 1.4 & 1.2 & - & 0.1 & <0.05& 0.2 \\
bp-bd-bk & 5.2 & 1.4 & 0.5 & 0.1 & 1.5 & 1.1 & - & 0.1 & <0.05& 0.2 \\
bp-bd-cl & 86.2 & 1.4 & 0.9 & 1.6 & 21.6 & 50.1 & 8.3 & 0.2 & <0.05& 0.2 \\
bp-bd-cn & 24.4 & 1.5 & 0.1 & 1.6 & 11.3 & 5.9 & 0.9 & <0.05& 1.9 & 0.4 \\
bp-bd-cs & 30.3 & 1.5 & 0.5 & 3.2 & 22.2 & 1.9 & - & 0.1 & <0.05& 0.2 \\
bp-bd-csf & 4.8 & 1.4 & 0.5 & 0.1 & 1.0 & 1.3 & - & 0.1 & <0.05& 0.2 \\
bp-bd-csv & 4.8 & 1.4 & 0.5 & 0.1 & 1.0 & 1.3 & - & 0.1 & <0.05& 0.2 \\
bp-bd-hc & 6.9 & 1.4 & 0.5 & 0.2 & 3.1 & 1.1 & - & 0.1 & <0.05& 0.2 \\
bp-bd-ks & 12.3 & 1.4 & 0.5 & 0.6 & 7.3 & 1.9 & - & 0.1 & <0.05& 0.2 \\
bp-bd-lf & 6.6 & 1.4 & 0.5 & 0.2 & 2.7 & 1.3 & - & 0.1 & <0.05& 0.2 \\
bp-bd-rb & 6.1 & 1.4 & 0.5 & 0.2 & 2.2 & 1.2 & - & 0.1 & <0.05& 0.2 \\
bp-bd-rc & 5.2 & 1.4 & 0.7 & 0.2 & 1.2 & 1.2 & - & 0.1 & <0.05& 0.2 \\
bp-cn-bc & 9.4 & 1.4 & 1.2 & 0.2 & 2.8 & 2.8 & - & 0.1 & <0.05& 0.2 \\
bp-cn-bk & 8.9 & 1.4 & 1.2 & 0.2 & 2.8 & 2.5 & - & 0.1 & <0.05& 0.2 \\
bp-cn-cl & 104.5 & 1.4 & 3.0 & 1.9 & 27.7 & 58.5 & 9.0 & 0.3 & <0.05& 0.2 \\
bp-cn-cn & 31.3 & 1.5 & <0.05& 1.9 & 19.1 & 4.0 & 2.7 & <0.05& 0.4 & 0.6 \\
bp-cn-cs & 13.5 & 1.5 & <0.05& 1.4 & 7.9 & 1.6 & - & <0.05& 0.2 & 0.4 \\
bp-cn-csf & 11.1 & 1.4 & 2.3 & 0.1 & 3.4 & 2.9 & - & 0.2 & <0.05& 0.2 \\
bp-cn-csv & 11.3 & 1.4 & 2.3 & 0.2 & 3.5 & 3.0 & - & 0.2 & <0.05& 0.2 \\
bp-cn-hc & 14.2 & 1.4 & 2.2 & 0.3 & 6.3 & 2.9 & - & 0.2 & <0.05& 0.2 \\
bp-cn-ks & 20.3 & 1.4 & 2.3 & 0.8 & 11.0 & 3.7 & - & 0.2 & <0.05& 0.2 \\
bp-cn-lf & 13.9 & 1.4 & 2.3 & 0.3 & 5.8 & 3.0 & - & 0.2 & <0.05& 0.2 \\
bp-cn-rb & 12.9 & 1.4 & 2.2 & 0.3 & 5.0 & 3.0 & - & 0.2 & <0.05& 0.2 \\
bp-cn-rc & 11.6 & 1.4 & 2.2 & 0.2 & 3.8 & 2.9 & - & 0.2 & <0.05& 0.2 \\
bp-ct-bc & 7.2 & 0.9 & 2.1 & 0.2 & 2.1 & 1.2 & - & 0.1 & <0.05& 0.2 \\
bp-ct-bk & 7.0 & 0.9 & 2.1 & 0.2 & 2.1 & 1.1 & - & 0.1 & <0.05& 0.2 \\
bp-ct-cl & 88.6 & 0.9 & 2.8 & 1.7 & 22.7 & 48.5 & 8.8 & 0.2 & <0.05& 0.2 \\
bp-ct-cn & 18.8 & 1.0 & <0.05& 1.1 & 8.9 & 3.1 & 0.9 & <0.05& 2.6 & 0.4 \\
bp-ct-cs & 12.1 & 1.0 & <0.05& 1.3 & 7.4 & 1.5 & - & <0.05& 0.2 & 0.2 \\
bp-ct-csf & 6.6 & 0.9 & 2.1 & 0.1 & 1.6 & 1.2 & - & 0.1 & <0.05& 0.2 \\
bp-ct-csv & 6.6 & 1.0 & 2.1 & 0.1 & 1.6 & 1.2 & - & 0.1 & <0.05& 0.2 \\
bp-ct-hc & 9.0 & 0.9 & 2.1 & 0.2 & 3.9 & 1.1 & - & 0.1 & <0.05& 0.2 \\
bp-ct-ks & 14.6 & 0.9 & 2.1 & 0.7 & 8.3 & 1.8 & - & 0.1 & <0.05& 0.2 \\
bp-ct-lf & 8.7 & 0.9 & 2.1 & 0.2 & 3.5 & 1.3 & - & 0.1 & <0.05& 0.2 \\
bp-ct-rb & 8.1 & 0.9 & 2.2 & 0.2 & 2.9 & 1.2 & - & 0.1 & <0.05& 0.2 \\
bp-ct-rc & 6.8 & 1.0 & 2.2 & 0.2 & 1.8 & 1.1 & - & 0.1 & <0.05& 0.2 \\
bp-dt-bc & 5.4 & 1.3 & 0.5 & 0.2 & 1.6 & 1.3 & - & 0.1 & <0.05& 0.2 \\
bp-dt-bk & 5.2 & 1.3 & 0.5 & 0.2 & 1.6 & 1.2 & - & 0.1 & <0.05& 0.2 \\
bp-dt-cl & 98.6 & 1.4 & 1.4 & 1.9 & 25.4 & 56.8 & 8.9 & 0.2 & <0.05& 0.2 \\
bp-dt-cn & 34.4 & 1.4 & <0.05& 2.2 & 22.0 & 3.7 & 3.3 & <0.05& 0.4 & 0.4 \\
bp-dt-cs & 13.1 & 1.4 & <0.05& 1.3 & 8.0 & 1.7 & - & <0.05& 0.2 & 0.1 \\
bp-dt-csf & 4.7 & 1.3 & 0.5 & 0.1 & 1.0 & 1.3 & - & 0.1 & <0.05& 0.2 \\
bp-dt-csv & 4.8 & 1.3 & 0.5 & 0.1 & 1.0 & 1.3 & - & 0.1 & <0.05& 0.2 \\
bp-dt-hc & 7.5 & 1.3 & 0.5 & 0.3 & 3.6 & 1.2 & - & 0.1 & <0.05& 0.2 \\
bp-dt-ks & 13.6 & 1.2 & 0.5 & 0.7 & 8.5 & 2.0 & - & 0.1 & <0.05& 0.2 \\
bp-dt-lf & 7.1 & 1.3 & 0.5 & 0.2 & 3.3 & 1.3 & - & 0.1 & <0.05& 0.2 \\
bp-dt-rb & 5.4 & 1.3 & 0.5 & 0.2 & 1.6 & 1.3 & - & 0.1 & <0.05& 0.2 \\
bp-dt-rc & 4.9 & 1.3 & 0.5 & 0.2 & 1.2 & 1.2 & - & 0.1 & <0.05& 0.2 \\
bp-os-bc & 5.3 & 1.4 & 0.5 & 0.2 & 1.5 & 1.2 & - & 0.1 & <0.05& 0.2 \\
bp-os-bk & 5.3 & 1.4 & 0.4 & 0.2 & 1.6 & 1.2 & - & 0.1 & <0.05& 0.2 \\
bp-os-cl & 88.2 & 1.4 & 0.9 & 1.6 & 21.9 & 52.1 & 8.0 & 0.2 & <0.05& 0.2 \\
bp-os-cn & 21.2 & 1.5 & 0.1 & 1.5 & 9.8 & 5.8 & 0.7 & <0.05& 0.8 & 0.4 \\
bp-os-cs & 12.6 & 1.5 & <0.05& 1.2 & 7.3 & 1.7 & - & <0.05& 0.3 & 0.1 \\
bp-os-csf & 4.8 & 1.4 & 0.5 & 0.1 & 1.0 & 1.3 & - & 0.1 & <0.05& 0.2 \\
bp-os-csv & 4.9 & 1.4 & 0.5 & 0.1 & 1.0 & 1.3 & - & 0.1 & <0.05& 0.2 \\
bp-os-hc & 7.3 & 1.4 & 0.5 & 0.3 & 3.5 & 1.1 & - & 0.1 & <0.05& 0.2 \\
bp-os-ks & 12.9 & 1.4 & 0.5 & 0.6 & 7.7 & 2.0 & - & 0.1 & <0.05& 0.2 \\
bp-os-lf & 7.7 & 1.4 & 0.5 & 0.2 & 2.9 & 1.3 & 0.7 & 0.1 & <0.05& 0.2 \\
bp-os-rb & 6.5 & 1.4 & 0.5 & 0.2 & 2.4 & 1.3 & - & 0.1 & <0.05& 0.2 \\
bp-os-rc & 5.4 & 1.4 & 0.8 & 0.2 & 1.3 & 1.2 & - & 0.1 & <0.05& 0.2 \\
bp-wt-bc & 5.2 & 1.4 & 0.5 & 0.1 & 1.4 & 1.3 & - & 0.1 & <0.05& 0.2 \\
bp-wt-bk & 5.2 & 1.4 & 0.5 & 0.1 & 1.4 & 1.1 & - & 0.1 & <0.05& 0.2 \\
bp-wt-cl & 81.5 & 1.4 & 0.9 & 1.6 & 20.4 & 47.9 & 6.9 & 0.1 & <0.05& 0.2 \\
bp-wt-cn & 21.3 & 1.5 & 0.1 & 1.4 & 8.7 & 5.3 & 0.6 & <0.05& 2.7 & 0.4 \\
bp-wt-cs & 11.9 & 1.5 & <0.05& 1.1 & 6.8 & 1.5 & - & <0.05& 0.3 & 0.1 \\
bp-wt-csf & 4.8 & 1.4 & 0.5 & 0.1 & 0.9 & 1.3 & - & 0.1 & <0.05& 0.2 \\
bp-wt-csv & 4.8 & 1.4 & 0.5 & 0.1 & 0.9 & 1.3 & - & 0.1 & <0.05& 0.2 \\
bp-wt-hc & 7.1 & 1.4 & 0.5 & 0.2 & 3.1 & 1.3 & - & 0.1 & <0.05& 0.2 \\
bp-wt-ks & 11.9 & 1.4 & 0.5 & 0.6 & 7.2 & 1.6 & - & 0.1 & <0.05& 0.2 \\
bp-wt-lf & 6.6 & 1.4 & 0.5 & 0.2 & 2.6 & 1.3 & - & 0.1 & <0.05& 0.2 \\
bp-wt-rb & 6.1 & 1.4 & 0.5 & 0.2 & 2.2 & 1.2 & - & 0.1 & <0.05& 0.2 \\
bp-wt-rc & 4.8 & 1.4 & 0.5 & 0.1 & 1.1 & 1.2 & - & 0.1 & <0.05& 0.2 \\
sp-ar-bc & 4.2 & 2.2 & <0.05& 0.1 & 0.4 & 1.2 & - & <0.05& - & - \\
sp-ar-bk & 3.9 & 2.1 & <0.05& 0.1 & 0.4 & 1.1 & - & <0.05& - & - \\
sp-ar-cl & 19.7 & 2.2 & 0.1 & 0.4 & 3.1 & 12.6 & 0.6 & 0.1 & - & - \\
sp-ar-cn & 10.0& 2.2 & <0.05& 0.5 & 4.2 & 1.6 & 0.7 & <0.05& 0.3 & 0.1 \\
sp-ar-cs & 4.3 & 2.2 & <0.05& 0.2 & 1.1 & 0.7 & - & <0.05& - & - \\
sp-ar-csf & 4.1 & 2.2 & <0.05& <0.05& 0.3 & 1.4 & - & <0.05& - & - \\
sp-ar-csv & 4.2 & 2.2 & <0.05& 0.1 & 0.3 & 1.4 & - & <0.05& - & - \\
sp-ar-hc & 4.3 & 2.1 & <0.05& 0.1 & 0.7 & 1.0 & - & <0.05& - & - \\
sp-ar-ks & 4.6 & 2.1 & <0.05& 0.1 & 1.3 & 0.8 & - & <0.05& - & - \\
sp-ar-lf & 4.2 & 2.1 & <0.05& 0.1 & 0.7 & 1.0 & - & <0.05& - & - \\
sp-ar-rb & 4.2 & 2.1 & <0.05& 0.1 & 0.6 & 1.1 & - & <0.05& - & - \\
sp-ar-rc & 2.4 & 2.2 & <0.05& <0.05& <0.05& <0.05& - & <0.05& - & - \\
sp-ba-bc & 6.1 & 2.5 & 0.1 & 0.1 & 1.4 & 1.6 & - & 0.1 & - & - \\
sp-ba-bk & 6.4 & 2.5 & 0.1 & 0.2 & 1.9 & 1.4 & - & 0.1 & - & - \\
sp-ba-cl & 110.1 & 2.6 & 0.5 & 1.8 & 25.0 & 68.1 & 10.1 & 0.1 & - & - \\
sp-ba-cn & 48.6 & 2.7 & 0.1 & 2.8 & 25.0 & 7.1 & 2.8 & <0.05& 6.7 & 0.3 \\
sp-ba-cs & 14.4 & 2.7 & <0.05& 1.3 & 7.8 & 1.8 & - & <0.05& 0.3 & 0.1 \\
sp-ba-csf & 5.6 & 2.5 & 0.2 & 0.1 & 0.9 & 1.6 & - & 0.1 & - & - \\
sp-ba-csv & 5.6 & 2.5 & 0.2 & 0.1 & 0.9 & 1.6 & - & 0.1 & - & - \\
sp-ba-hc & 9.0 & 2.6 & 0.1 & 0.3 & 4.3 & 1.4 & - & 0.1 & - & - \\
sp-ba-ks & 27.4 & 2.5 & 0.3 & 1.8 & 15.1 & 6.4 & 0.2 & 0.1 & - & - \\
sp-ba-lf & 9.0 & 2.5 & 0.1 & 0.3 & 4.1 & 1.6 & - & 0.1 & - & - \\
sp-ba-rb & 7.1 & 2.5 & 0.1 & 0.2 & 2.4 & 1.5 & - & 0.1 & - & - \\
sp-ba-rc & 5.6 & 2.6 & 0.1 & 0.2 & 1.0 & 1.5 & - & 0.1 & - & - \\
sp-bd-bc & 5.3 & 2.3 & <0.05& 0.1 & 1.2 & 1.4 & - & 0.1 & - & - \\
sp-bd-bk & 5.2 & 2.3 & <0.05& 0.1 & 1.2 & 1.3 & - & 0.1 & - & - \\
sp-bd-cl & 88.1 & 2.3 & 0.3 & 1.6 & 19.8 & 54.4 & 7.0 & 0.2 & - & - \\
sp-bd-cn & 23.2 & 2.4 & 0.1 & 1.5 & 10.1 & 7.0 & 0.7 & <0.05& 0.4 & 0.3 \\
sp-bd-cs & 12.8 & 2.4 & <0.05& 1.1 & 6.7 & 1.7 & - & <0.05& 0.3 & 0.1 \\
sp-bd-csf & 4.9 & 2.3 & <0.05& 0.1 & 0.7 & 1.5 & - & <0.05& - & - \\
sp-bd-csv & 5.0 & 2.3 & <0.05& 0.1 & 0.8 & 1.6 & - & 0.1 & - & - \\
sp-bd-hc & 6.8 & 2.3 & <0.05& 0.2 & 2.9 & 1.0 & - & 0.1 & - & - \\
sp-bd-ks & 11.7 & 2.3 & <0.05& 0.6 & 6.7 & 1.7 & - & 0.1 & - & - \\
sp-bd-lf & 6.5 & 2.3 & <0.05& 0.2 & 2.4 & 1.4 & - & 0.1 & - & - \\
sp-bd-rb & 6.2 & 2.3 & <0.05& 0.2 & 1.9 & 1.4 & - & 0.1 & - & - \\
sp-bd-rc & 5.0 & 2.3 & <0.05& 0.1 & 0.9 & 1.4 & - & <0.05& - & - \\
sp-cn-bc & 22.2 & 2.7 & 4.2 & 0.3 & 6.7 & 6.8 & - & 0.3 & - & - \\
sp-cn-bk & 23.2 & 2.7 & 4.3 & 0.3 & 7.4 & 6.8 & - & 0.3 & - & - \\
sp-cn-cl & 124.3 & 2.6 & 5.3 & 1.9 & 30.5 & 71.7 & 9.0 & 0.4 & - & - \\
sp-cn-cn & 83.3 & 2.7 & 0.1 & 4.5 & 53.1 & 7.1 & 6.9 & <0.05& 0.4 & 6.9 \\
sp-cn-cs & 21.4 & 2.7 & <0.05& 1.3 & 7.9 & 1.8 & - & <0.05& 0.3 & 6.8 \\
sp-cn-csf & 21.9 & 2.6 & 4.4 & 0.2 & 6.1 & 6.9 & - & 0.3 & - & - \\
sp-cn-csv & 22.3 & 2.6 & 4.2 & 0.2 & 6.4 & 7.3 & - & 0.3 & - & - \\
sp-cn-hc & 24.2 & 2.7 & 4.3 & 0.4 & 8.6 & 6.6 & - & 0.3 & - & - \\
sp-cn-ks & 30.5 & 2.6 & 4.2 & 0.8 & 13.5 & 7.7 & - & 0.3 & - & - \\
sp-cn-lf & 24.9 & 2.6 & 4.2 & 0.3 & 8.4 & 7.2 & - & 0.3 & - & - \\
sp-cn-rb & 23.6 & 2.6 & 4.3 & 0.3 & 7.8 & 6.9 & - & 0.3 & - & - \\
sp-cn-rc & 22.4 & 2.7 & 4.2 & 0.3 & 6.4 & 7.0 & - & 0.3 & - & - \\
sp-ct-bc & 10.0& 1.7 & 4.2 & 0.2 & 2.2 & 1.4 & - & 0.2 & - & - \\
sp-ct-bk & 9.6 & 1.6 & 4.0 & 0.2 & 2.2 & 1.2 & - & 0.2 & - & - \\
sp-ct-cl & 95.2 & 1.7 & 5.2 & 1.7 & 22.7 & 53.6 & 8.1 & 0.3 & - & - \\
sp-ct-cn & 19.8 & 1.8 & 0.1 & 1.4 & 8.6 & 6.1 & 0.5 & <0.05& 0.4 & 0.4 \\
sp-ct-cs & 35.1 & 1.8 & 4.1 & 3.2 & 23.0 & 1.9 & - & 0.2 & - & - \\
sp-ct-csf & 9.4 & 1.7 & 4.2 & 0.1 & 1.7 & 1.4 & - & 0.2 & - & - \\
sp-ct-csv & 9.2 & 1.6 & 4.2 & 0.1 & 1.5 & 1.4 & - & 0.2 & - & - \\
sp-ct-hc & 11.9 & 1.6 & 4.2 & 0.2 & 4.1 & 1.3 & - & 0.2 & - & - \\
sp-ct-ks & 17.2 & 1.7 & 4.1 & 0.6 & 8.3 & 2.0 & - & 0.2 & - & - \\
sp-ct-lf & 11.5 & 1.7 & 4.2 & 0.2 & 3.5 & 1.4 & - & 0.2 & - & - \\
sp-ct-rb & 10.8 & 1.6 & 4.1 & 0.2 & 3.1 & 1.4 & - & 0.2 & - & - \\
sp-ct-rc & 9.6 & 1.7 & 4.2 & 0.1 & 1.9 & 1.3 & - & 0.2 & - & - \\
sp-dt-bc & 5.4 & 2.2 & <0.05& 0.2 & 1.4 & 1.4 & - & 0.1 & - & - \\
sp-dt-bk & 5.2 & 2.2 & <0.05& 0.2 & 1.3 & 1.3 & - & 0.1 & - & - \\
sp-dt-cl & 106.3 & 2.1 & 0.4 & 1.7 & 24.8 & 65.5 & 9.5 & 0.1 & - & - \\
sp-dt-cn & 33.3 & 2.2 & 0.1 & 2.5 & 18.3 & 7.0 & 1.5 & <0.05& 0.4 & 0.3 \\
sp-dt-cs & 32.5 & 2.3 & <0.05& 3.4 & 23.9 & 2.0 & - & 0.1 & - & - \\
sp-dt-csf & 4.8 & 2.1 & <0.05& 0.1 & 0.8 & 1.6 & - & <0.05& - & - \\
sp-dt-csv & 4.9 & 2.2 & <0.05& 0.1 & 0.8 & 1.5 & - & 0.1 & - & - \\
sp-dt-hc & 7.2 & 2.2 & <0.05& 0.3 & 3.3 & 1.2 & - & 0.1 & - & - \\
sp-dt-ks & 13.3 & 2.1 & <0.05& 0.7 & 8.0 & 2.1 & - & 0.1 & - & - \\
sp-dt-lf & 6.8 & 2.1 & <0.05& 0.2 & 2.8 & 1.4 & - & 0.1 & - & - \\
sp-dt-rb & 6.4 & 2.2 & <0.05& 0.2 & 2.3 & 1.4 & - & 0.1 & - & - \\
sp-dt-rc & 2.3 & 2.1 & <0.05& <0.05& <0.05& <0.05& - & <0.05& - & - \\
sp-os-bc & 6.4 & 2.3 & <0.05& 0.1 & 1.3 & 1.7 & - & 0.1 & - & - \\
sp-os-bk & 5.4 & 2.3 & <0.05& 0.1 & 1.3 & 1.4 & - & 0.1 & - & - \\
sp-os-cl & 101.0 & 2.3 & 0.4 & 1.8 & 22.9 & 61.7 & 9.5 & 0.2 & - & - \\
sp-os-cn & 25.3 & 2.4 & 0.1 & 1.7 & 11.5 & 7.2 & 0.7 & <0.05& 0.4 & 0.3 \\
sp-os-cs & 13.7 & 2.5 & <0.05& 1.2 & 7.4 & 1.8 & - & <0.05& 0.3 & 0.1 \\
sp-os-csf & 5.0 & 2.3 & <0.05& 0.1 & 0.8 & 1.6 & - & 0.1 & - & - \\
sp-os-csv & 5.1 & 2.3 & <0.05& 0.1 & 0.8 & 1.6 & - & 0.1 & - & - \\
sp-os-hc & 7.2 & 2.3 & <0.05& 0.3 & 3.1 & 1.2 & - & 0.1 & - & - \\
sp-os-ks & 12.9 & 2.3 & <0.05& 0.6 & 7.6 & 2.0 & - & 0.1 & - & - \\
sp-os-lf & 7.0 & 2.3 & <0.05& 0.2 & 2.7 & 1.5 & - & 0.1 & - & - \\
sp-os-rb & 6.7 & 2.4 & <0.05& 0.2 & 2.2 & 1.5 & - & 0.1 & - & - \\
sp-os-rc & 5.2 & 2.3 & <0.05& 0.2 & 1.0 & 1.4 & - & <0.05& - & - \\
sp-wt-bc & 5.5 & 2.3 & <0.05& 0.2 & 1.2 & 1.5 & - & 0.1 & - & - \\
sp-wt-bk & 5.2 & 2.3 & <0.05& 0.2 & 1.2 & 1.3 & - & 0.1 & - & - \\
sp-wt-cl & 85.3 & 2.3 & 0.3 & 1.5 & 18.8 & 52.6 & 6.7 & 0.1 & - & - \\
sp-wt-cn & 19.2 & 2.4 & 0.1 & 1.1 & 6.9 & 7.0 & 0.4 & <0.05& 0.4 & 0.3 \\
sp-wt-cs & 12.5 & 2.4 & <0.05& 1.1 & 6.6 & 1.7 & - & <0.05& 0.3 & 0.1 \\
sp-wt-csf & 5.0 & 2.3 & <0.05& 0.1 & 0.7 & 1.5 & - & 0.1 & - & - \\
sp-wt-csv & 5.0 & 2.3 & <0.05& 0.1 & 0.7 & 1.5 & - & 0.1 & - & - \\
sp-wt-hc & 6.9 & 2.3 & <0.05& 0.2 & 2.8 & 1.2 & - & 0.1 & - & - \\
sp-wt-ks & 11.7 & 2.3 & <0.05& 0.6 & 6.6 & 1.8 & - & 0.1 & - & - \\
sp-wt-lf & 6.5 & 2.3 & <0.05& 0.2 & 2.3 & 1.4 & - & 0.1 & - & - \\
sp-wt-rb & 6.2 & 2.3 & <0.05& 0.2 & 1.9 & 1.4 & - & 0.1 & - & - \\
sp-wt-rc & 5.0 & 2.3 & <0.05& 0.2 & 0.9 & 1.4 & - & 0.1 & - & - \\
abc32-cmp & 0.3 & 0.3 & <0.05& <0.05& <0.05& <0.05& - & <0.05& - & - \\
abc32-dc2 & 0.2 & 0.2 & <0.05& <0.05& <0.05& <0.05& - & <0.05& - & - \\
abc32-resyn & 0.2 & 0.2 & <0.05& <0.05& <0.05& <0.05& - & <0.05& - & - \\
abc32-resyn2 & 0.2 & 0.2 & <0.05& <0.05& <0.05& <0.05& - & <0.05& - & - \\
abc32-resyn3 & 0.2 & 0.2 & <0.05& <0.05& <0.05& <0.05& - & <0.05& - & - \\
abc64-cmp & 2.3 & 2.1 & <0.05& <0.05& <0.05& <0.05& - & <0.05& - & - \\
abc64-dc2 & 2.3 & 2.1 & <0.05& <0.05& <0.05& <0.05& - & <0.05& - & - \\
abc64-resyn & 2.3 & 2.1 & <0.05& <0.05& <0.05& <0.05& - & <0.05& - & - \\
abc64-resyn2 & 2.3 & 2.1 & <0.05& <0.05& <0.05& <0.05& - & <0.05& - & - \\
abc64-resyn3 & 2.3 & 2.2 & <0.05& <0.05& <0.05& <0.05& - & <0.05& - & - \\
abc128-cmp & 34.0 & 32.6 & <0.05& <0.05& <0.05& <0.05& - & 0.1 & - & - \\
abc128-dc2 & 33.4 & 32.0 & <0.05& <0.05& <0.05& <0.05& - & 0.1 & - & - \\
abc128-resyn & 33.5 & 32.0 & <0.05& <0.05& <0.05& <0.05& - & 0.1 & - & - \\
abc128-resyn2 & 34.5 & 33.0 & <0.05& <0.05& <0.05& <0.05& - & 0.1 & - & - \\
abc128-resyn3 & 33.8 & 32.4 & <0.05& <0.05& <0.05& <0.05& - & 0.1 & - & - 

\end{longtable}
}

\end{document}